\newcommand{\e}{\varepsilon}
\newcommand{\vphi}{\varphi}
\newcommand{\iy}{\infty}
\newcommand{\st}{\ : \ }
\renewcommand{\leq}{\leqslant}
\renewcommand{\geq}{\geqslant}
\DeclareMathOperator{\tr}{Tr}
\DeclareMathOperator{\Id}{I}
\DeclareMathOperator{\I}{Id}
\DeclareMathOperator{\conv}{conv}
\DeclareMathOperator{\E}{\mathbf{E}}
\DeclareMathOperator{\card}{card}
\newcommand{\cD}{\mathrm{D}}
\newcommand{\cF}{\mathcal{F}}
\newcommand{\cH}{\mathcal{H}}
\newcommand{\cN}{\mathcal{N}}
\newcommand{\cS}{\mathrm{Sep}}
\newcommand{\PSD}{\mathcal{PSD}}
\newcommand{\N}{\mathbb{N}}
\newcommand{\R}{\mathbb{R}}
\newcommand{\C}{\mathbb{C}}
\renewcommand{\P}{\mathbf{P}}
\newcommand{\cM}{\mathsf{M}}
\newcommand{\M}{\cM}
\newcommand{\gGL}{\mathsf{GL}}
\newcommand{\mC}{\mathcal{C}}
\newcommand{\scalar}[2]{\langle #1 , #2\rangle}
\newcommand{\braket}[2]{\langle #1 | #2\rangle}
\newcommand{\ketbra}[2]{| #1 \rangle \langle #2 |}
\newcommand{\bra}[1]{\langle #1 |}
\newcommand{\ket}[1]{| #1 \rangle}
\newcommand{\sa}{\textnormal{sa}}
\newcommand{\HS}{\textnormal{HS}}
\newcommand{\op}{\textnormal{op}}
\theoremstyle{plain}
\newtheorem{theorem}{Theorem}
\newtheorem{proposition}[theorem]{Proposition}
\newtheorem{lemma}[theorem]{Lemma}
\theoremstyle{definition}
\theoremstyle{remark}
\begin{document}

\begin{frontmatter}[classification=text]

\author{Guillaume Aubrun\thanks{Supported in part by ANR (France) 
grants OSQPI (2011-BS01-008-02) and StoQ (2014-CE25-0003)}}
\author{Stanis\l aw Szarek\thanks{Supported in part by grants from the National Science
Foundation (U.S.A.) and by the first ANR grant listed above}}

\begin{abstract}
The well-known Horodecki criterion asserts that a state $\rho$ on $\C^d \otimes \C^d$ is entangled if and only if
there exists a positive map $\Phi : \cM_d \to \cM_d$ such that the operator $(\Phi \otimes \I)(\rho)$ is not positive
semi-definite. 
We show that the number of such maps needed to detect all robustly entangled states 
(i.e., states $\rho$ which remain entangled even in the presence of substantial randomizing noise) 
exceeds $\exp(c d^3 / \log d)$. 
The proof is based on the 1977 inequality of Figiel--Lindenstrauss--Milman,   
which ultimately relies on Dvoretzky's theorem 
about almost spherical sections of convex bodies.      
We interpret that inequality as a statement about approximability of 
convex bodies  by polytopes with few vertices or with few faces and 
combine it with the study of fine properties of the set of quantum states and that of separable states. 
 Our results can be thought of as 
geometrical manifestations of the complexity of entanglement detection.
\end{abstract}
\end{frontmatter}

\section{Introduction}
Entanglement \cite{EPR35, Schrodinger35, Werner89} lies at the heart of quantum mechanics 
and is a fundamental resource for quantum information and computation. 
It underlies many of the most striking potential 
applications of quantum phenomena to information processing such as, for example, teleportation \cite{Bennett93}. 
However, its properties remain elusive; even at the mathematical level, the current understanding 
of entanglement in high-dimensional systems remains very incomplete, and not for lack of trying. 
In particular, there is an extensive literature on the entanglement detection, of which we 
mention below just several highlights. 

It is an elementary observation that if $\rho$ is a separable state on $\C^d \otimes \C^d$ and 
$\Phi : \cM_d \to \cM_d$ is a positive map (i.e., a map which preserves positive semi-definiteness 
of $d \times d$ matrices), then 
$(\Phi \otimes \I)(\rho)$ is positive semi-definite. 
A remarkable result known as the Horodecki criterion \cite{HHH96} asserts that the converse is true: 
if a state $\rho$ on $\C^d \otimes \C^d$ is entangled, then there exists a positive map 
$\Phi : \cM_d \to \cM_d$ which detects its entanglement 
in the sense that $(\Phi \otimes \I)(\rho)$ has a negative eigenvalue. 
Such a map is called an \emph{entanglement witness}. 

The study of positive maps between matrix algebras is notoriously difficult. 
The situation is quite simple when $d=2$: 
any positive map on $\cM_2$ is decomposable \cite{Stormer63}, i.e.,  can be written
as $\Phi_1 + \Phi_2 \circ T$ where $\Phi_1,\Phi_2$ are completely positive maps and 
$T$ is the transposition on $\cM_2$. 
(Of course completely positive maps by themselves are useless for the task of entanglement 
detection since all their extensions are positive by definition.)  
It follows that the well-known Peres partial transposition criterion  
is a necessary and sufficient condition for separability of $2$-qubits states \cite{Peres96,HHH96}. 

The situation in higher dimensions is much less clear. 
To describe it, we will use the following concept.
Let $\cF = (\Phi_i)$ be a family of positive maps on $\cM_d$ and let $\mathrm{E}$ 
be a subset of the set of entangled states on $\C^d \otimes \C^d$.
We say that $\cF$ is \emph{universal} for $\mathrm{E}$ if for every $\rho \in \mathrm{E}$, 
there is an index $i$ such that $\Phi_i$ is an entanglement witness for $\rho$, i.e., 
$(\Phi_i \otimes \I)(\rho)$ has a negative eigenvalue. 

First, for $d \geq 3$, the 
partial transposition criterion is no longer sufficient \cite{PH97}. Moreover, for such $d$, 
any family $\cF$ which is universal for all entangled states must be infinite 
(this result is proved in \cite{Skowronek16} and based on \cite{HaKye11}). 

However, asking for detecting \emph{all} entangled states is probably too demanding for any practical purpose. 
We say that a state $\rho$ on $\C^d \otimes \C^d$
is \emph{robustly entangled} if $\frac{1}{2}(\rho + \rho_*)$ is entangled, 
where $\rho_* = \Id/d^2$ denotes the maximally mixed state. In other words, 
robustly entangled states remain entangled even in the presence of substantial randomizing noise. 
The main result of the paper is a super-exponential lower bound on the cardinality of 
any universal family which detects robust entanglement.

\begin{theorem} \label{theorem:main} 
There is a universal constant $c>0$ such that the following holds. 
Consider $d \geq 2$ and let $(\Phi_i)_{1 \leq i \leq N}$ be a family of positive maps on $\cM_d$ 
which is universal for all robustly entangled states. 
Then $N+1 \geq \exp( c d^3 / \log d)$.
\end{theorem}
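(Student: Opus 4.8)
The plan is to turn the hypothesis into a statement about approximating the set of separable states by polytopes, in which the number $N$ of maps controls a number of faces, and then to contradict the Figiel--Lindenstrauss--Milman (FLM) inequality. Work in the real affine hyperplane $\mathcal H$ of self-adjoint trace-one operators on $\C^d\otimes\C^d$, of dimension $n=d^4-1$, with $\rho_*=\Id/d^2$ as origin; set $K=\SEP-\rho_*$, a convex body with $0$ in its interior (it contains the Gurvits--Barnum separable ball). For each $i$ let $P_i=\{\rho\in\mathcal H:(\Phi_i\otimes\I)(\rho)\geq 0\}$; this is a spectrahedron, and since $\SEP\subseteq P_i$ (the elementary observation recalled in the introduction) one checks that a universal family must have each $\Phi_i(\Id)$ invertible (otherwise some $W_{i,\psi}:=(\Phi_i^*\otimes\I)(\ketbra{\psi}{\psi})$ would be nonzero with $\langle W_{i,\psi},\rho_*\rangle=0$, forcing the separable ball onto the boundary of the half-space $\{\langle W_{i,\psi},\cdot\rangle\geq 0\}\supseteq P_i$, which is impossible), so after a congruence we may assume each $\Phi_i$ is unital. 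Then $P_i=\{\rho:\langle W_{i,\psi},\rho\rangle\geq 0\text{ for all unit }\psi\}$ is an intersection of half-spaces whose normals $W_{i,\psi}$ are block-positive (hence valid constraints for $\SEP$ as well) and of operator norm $d^{O(1)}$. Let $\mathcal D$ be the set of states (itself such an intersection) and $D=\bigcap_i P_i\cap\mathcal D$. The elementary observation gives $\SEP\subseteq D$, while universality for robust entanglement says exactly that $D$ contains no state $\rho$ with $\tfrac12(\rho+\rho_*)\notin\SEP$, i.e. $D\subseteq 2\SEP-\rho_*$; hence $K\subseteq D-\rho_*\subseteq 2K$.

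Passing to polars about $\rho_*$ turns this into $\tfrac12 K^\circ\subseteq(D-\rho_*)^\circ\subseteq K^\circ$, and $(D-\rho_*)^\circ$ is the closed convex hull of the radially normalized half-space normals defining $D$: a bounded family parametrized by the $N$ indices $i$ together with a unit vector $\psi\in\C^{d^2}$, and by the pure-state normals of $\mathcal D$, all depending Lipschitz-continuously (with constant $d^{O(1)}$) on the unit-vector parameter. Replacing each continuum of parameters by a $\delta$-net with $\delta$ a small inverse power of $d$, and using that $K^\circ$ contains a Euclidean ball of radius $\asymp 1$ (because $\SEP$ has Hilbert--Schmidt outradius $\asymp 1$ about $\rho_*$), the convex hull of these nets is a polytope $Q$ with $\tfrac14 K^\circ\subseteq Q\subseteq K^\circ$ and at most $N\cdot\exp(O(d^2\log d))$ vertices; dually $Q^\circ$ is a polytope with the same number of facets satisfying $K\subseteq Q^\circ\subseteq 4K$.

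On the other hand $\SEP$ admits a polytopal approximation with \emph{few} vertices: a $\delta'$-net of pure product states $\ketbra{\phi}{\phi}\otimes\ketbra{\psi}{\psi}$ with $\delta'\asymp\inradius(K)$ (an inverse power of $d$ by Gurvits--Barnum) has convex hull $R$ with $R\subseteq K\subseteq 2R$ and at most $\exp(O(d\log d))$ vertices, since the product states form a manifold of real dimension $4d-4$. Now apply the FLM inequality, in the form that for a convex body $K\subseteq\R^n$ any sandwiching polytope with $f$ facets and any sandwiching polytope with $v$ vertices satisfy $\log f\cdot\log v\gtrsim n$, to $K$, $Q^\circ$ and $R$: with $f\le N\exp(O(d^2\log d))$, $v\le\exp(O(d\log d))$ and $n=d^4-1$ this yields $(\log N+O(d^2\log d))\cdot O(d\log d)\gtrsim d^4$, hence $\log N\gtrsim d^3/\log d$ once $d$ exceeds an absolute constant. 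Choosing $c$ small enough that $\exp(cd^3/\log d)\le 2$ for the finitely many remaining values of $d$ (where $N\ge 1$) gives the stated bound $N+1\ge\exp(cd^3/\log d)$.

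The hard part is the middle step: quantifying how faithfully finitely many netted constraints reproduce the full spectrahedral description, which rests entirely on the ``fine'' Euclidean geometry of $\SEP$ and of $\mathcal D$ near $\rho_*$ --- precisely, two-sided bounds on their inradii and outradii, so that Hausdorff-approximation of the families of normals upgrades to multiplicative approximation of the convex hulls, and so that the net cardinalities stay at $\exp(O(d^2\log d))$, which must remain comfortably subexponential relative to the target $\exp(\Theta(d^3/\log d))$. A secondary technical point is that $K=\SEP-\rho_*$ is not centrally symmetric, so one needs the form of the FLM inequality valid for general convex bodies, and one must carry the various approximation constants consistently through the polar duality.
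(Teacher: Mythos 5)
Your proposal follows essentially the same route as the paper: reduce to unital witnesses, replace each spectrahedron $\{\rho : (\Phi_i\otimes\I)(\rho)\in\PSD\}$ by a polytope with $\exp(O(d^2\log d))$ facets obtained by netting its defining constraints, intersect to get a polytope with $(N+1)\exp(O(d^2\log d))$ facets sandwiching $\cS$ within a constant $\bullet$-factor, and play this against the lower bound $\dim_F(\cS)=\Omega(d^3/\log d)$, which is itself derived from the Figiel--Lindenstrauss--Milman inequality combined with the $O(d\log d)$ upper bound on $\dim_V(\cS)$ coming from a net of pure product states. This is exactly the paper's architecture, merely phrased in the dual (vertices of the polar) language throughout.

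Three points need repair. First, your claim that a universal family must have every $\Phi_i(\Id)$ invertible is false: the zero map is positive and may be adjoined to any universal family; your argument overlooks the case $W_{i,\psi}=0$. The correct reduction (as in the paper) is to replace $\Phi_i$ by $X\mapsto\Phi_i(X)+P_{E^\perp}XP_{E^\perp}$, which leaves the set $\{\rho:(\Phi_i\otimes\I)(\rho)\in\PSD\}$ unchanged. Second, the inequality ``$\log f\cdot\log v\gtrsim n$'' is not the FLM inequality and is false for general convex bodies (for the simplex both factors are $\Theta(\log n)$); the correct form is $\log f\cdot\log v\gtrsim n^2/a(K)^2$, and you obtain $d^4$ on the right only because $a(\cS)\asymp d^2$ by Gurvits--Barnum --- an input your computation uses silently. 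Third, the step you yourself flag as the hard part --- that discretizing the constraints $\langle W_{i,\psi},\cdot\rangle\ge 0$ over a $\delta$-net in $\psi$ costs only a constant $\bullet$-factor --- is where the real work lies and is left unproved: both the asserted $d^{O(1)}$ bound on $\|W_{i,\psi}\|$ and the Lipschitz dependence on $\psi$ come down to a uniform control of $\tr[(\Phi_i\otimes\I)\rho]$ over states, which the paper supplies as Lemma \ref{lemma:unital-not-too-wild} ($0\le\tr[(\Phi_i\otimes\I)\rho]\le d$ for unital positive $\Phi_i$, via the Schmidt decomposition). With these three repairs your argument coincides with the paper's proof.
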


We used the factor $\frac{1}{2}$ in the definition of robust entanglement only for simplicity; 
the same proof works for any fixed choice of weights.
With a little care, the argument gives actually much more. 

\medskip

\noindent {\bf Theorem 1'.} \emph{There are universal constants $c_0$, $c>0$ such that the following holds. Consider $d \geq 2$ and set $\e_d=c_0 \log d / \sqrt{d}$. 
Let $(\Phi_i)_{1 \leq i \leq N}$ be a family of positive maps on $\cM_d$ which is universal for the set 
\begin{equation} \label{eq:extremely-entangled}
\{ \rho \textnormal{ state on } \C^d \otimes \C^d \st \e_d \rho + (1-\e_d) \rho_* \textnormal{ is entangled} \}. 
\end{equation}
Then $N+1 \geq \exp( c d^2 \log d)$.}

\medskip

The fact that universal families must be large is not surprising. 
Indeed, each positive map leads to a test for entanglement detection which runs in polynomial time. 
Consequently, the existence of small universal families would have to be reconciled with the 
known result that deciding whether a given state is separable or entangled is an NP-hard problem.  
This was first observed by Gurvits \cite{Gurvits03} and refined in \cite{Ioannou07,Gharibian10}; 
other relevant references include \cite{BCY11, HarrowMontanaro13, GHMW15}. 
However, to the best of our knowledge, results in the spirit of Theorem \ref{theorem:main} 
cannot be derived from the existing literature. For starters, the complexity results cited  
above that address lower bounds generally focus on states situated 
\emph{very} close to the separability/entanglement border  
(which precludes the robustness feature present in our setting) 
or are based on computational assumptions such as in  \cite{HarrowMontanaro13}. 
An exception is the forthcoming work \cite{HNW16}, 
which addresses lower bounds on the size of some relaxations of entanglement detection problems via semidefinite programming.
(This includes in particular the hierarchy of extendible states from \cite{DPS04}; see \cite{Lancien16} for related questions.)
Let us also mention a result from \cite{SWZ08} that is similar in spirit to ours: 
the set of completely positive maps occupies a subexponentially (in the dimension of that set) 
small proportion, in terms of volume,  of the set of all positive maps.

Our proof of Theorem \ref{theorem:main} is geometric and is based on the following observation due to 
Figiel--Lindenstrauss--Milman \cite{FLM77}: an $n$-dimensional polytope 
with a center of symmetry cannot have few faces and -- simultaneously -- 
 few vertices. Complexity must lie somewhere. 
 This paradigm is actually rather general and can be applied to the set
of separable states (although it is neither a polytope nor centrally symmetric): 
given that it has ``few'' extreme points, it must have many ``faces.''  
Since we can upper-bound the number of faces provided by each test detecting entanglement, we 
conclude that many tests are needed. 
This vague scheme can be converted to a rigorous proof through the introduction of the 
\emph{verticial dimension} and the \emph{facial dimension} of a convex body $K$, 
which quantify the number of vertices
(resp., faces) required for a polytope to approximate $K$ within a constant factor, and 
are measures of algorithmic complexity of $K$.  
In the process, we obtain sharp bounds for some of these invariants for the set   
of all quantum states and for the set of separable states; these bounds 
(particularly \eqref{eq:dimV-D} and \eqref{eq:dimV-S})  and  
 the arguments leading to them are  
surprisingly subtle and may be of independent interest.  

\medskip
The paper is organized as follows. 
The remainder of the Introduction is devoted to the notation and to basic background results. 
Section \ref{section:dim} introduces the concepts of verticial and facial
dimensions, and states the fundamental Figiel--Lindenstrauss--Milman 
inequality \eqref{eq:FLM} asserting that their product must be large. 
It contains, in Table \ref{table}, estimates of 
these parameters for a selection of classical convex bodies, 
and for the set of all quantum states and that of separable states. 
The latter estimates constitute the main technical ingredient of the proofs of 
Theorems \ref{theorem:main} and \ref{theorem:main}' 
that are presented in in Section \ref{section:proof}. 
The Figiel--Lindenstrauss--Milman inequality is related to the classical 
Dvoretzky--Milman theorem in Section \ref{section:dvoretzky}. 
Section \ref{section:dim-DS} proves the estimates stated in Table \ref{table} 
(sometimes up to a logarithmic factor, when irrelevant for the main argument). 
Finally, in Section \ref{section:dimension-D-random} 
we complete the proof of the full strength of the 
bounds on verticial dimensions from Table \ref{table} 
(stated as Theorems \ref{theorem:dimension-D} and 
\ref{theorem:dimension-S} in Section \ref{section:dim-DS}, 
and specifically as \eqref{eq:dimV-D} and \eqref{eq:dimV-S}) 
by  removing the remaining ``technical'' logarithmic factor. 

The results from this paper will be incorporated in a forthcoming book \cite{book},  
which contains more background
on both Quantum Information Theory and Asymptotic Geometric Analysis, 
and many examples of their interaction.

\subsubsection*{Notation and basic facts} 
A convex body $K \subset \R^n$ is an $n$-dimensional convex compact set.
Denote by $|\cdot|$ the Euclidean norm in $\R^n$ or $\C^n$, and by $B_2^n$ 
and $S^{n-1}$ the unit ball and unit sphere in $\R^n$.
An $\e$-net in a set $S \subset \R^n$ 
is a subset $\cN \subset S$ with the property that for any $x \in S$ 
there is $y \in \cN$ with $|x-y| \leq \e$. We will repeatedly use the following elementary bound.
\begin{lemma} \label{lemma:net}
For every $\e \in (0,1)$ and $n \in \N$, there is an $\e$-net $\cN$ in $S^{n-1}$ with $\card \cN \leq (1+2/\e)^n$. 
Conversely, if $\cN$ is an $\e$-net in $S^{n-1}$ for some $\e \in (0,\sqrt{2})$, then $\card \cN \geq 2/\sin^{n-1} \theta$,
where $\theta = 2 \arcsin(\e/2) < \pi/2$ is the angle between two points in $S^{n-1}$ which are $\e$-distant.
\end{lemma}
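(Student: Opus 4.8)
The plan is to establish both halves by elementary packing and covering estimates.

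\textit{Upper bound.} First I would let $\cN$ be a \emph{maximal} subset of $S^{n-1}$ all of whose points are at mutual distance $\geq \e$; such a set exists by compactness, and maximality forces it to be an $\e$-net, since any $x \in S^{n-1}$ that is $\e$-far from every point of $\cN$ could be adjoined. The open balls $y + \tfrac{\e}{2} B_2^n$ for $y \in \cN$ are then pairwise disjoint (a common point of two of them would place their centers within distance $\e$), and because $|y| = 1$ they are all contained in $(1+\tfrac{\e}{2}) B_2^n$. Comparing Lebesgue volumes gives $\card \cN \cdot (\e/2)^n \leq (1 + \e/2)^n$, i.e. $\card \cN \leq (1 + 2/\e)^n$.

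\textit{Lower bound, reduction.} Now suppose $\cN$ is an arbitrary $\e$-net. For $x, y \in S^{n-1}$ the condition $|x-y| \leq \e$ is equivalent to the geodesic distance between $x$ and $y$ being at most $\theta := 2\arcsin(\e/2)$, since $|x-y|^2 = 2 - 2\cos\alpha$ forces $\sin(\alpha/2) = |x-y|/2$, where $\alpha$ is the angle between $x$ and $y$. Hence the closed caps $\{x \in S^{n-1} \st |x-y| \leq \e\}$, $y \in \cN$, are geodesic balls of radius $\theta$, and they cover $S^{n-1}$; writing $\sigma$ for the rotation-invariant probability measure on $S^{n-1}$ and $C(\theta)$ for a cap of geodesic radius $\theta$, this yields $\card \cN \geq 1/\sigma(C(\theta))$. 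The hypothesis $\e < \sqrt 2$ is exactly what makes $\theta < \pi/2$, so it remains to show $\sigma(C(\theta)) \leq \tfrac12 \sin^{n-1}\theta$ in that range.

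\textit{The cap estimate.} For this I would project $S^{n-1}$ orthogonally onto the equatorial hyperplane $\R^{n-1}$: the open upper hemisphere maps bijectively onto $B_2^{n-1}$, under this map $\sigma$ pushes forward to a constant multiple of $(1-|u|^2)^{-1/2}\,du$, and a cap of geodesic radius $\theta < \pi/2$ about the pole maps onto $\sin\theta \cdot B_2^{n-1}$. Consequently
\[ \sigma(C(\theta)) = \frac{\int_{\sin\theta\, B_2^{n-1}} (1-|u|^2)^{-1/2}\,du}{2 \int_{B_2^{n-1}} (1-|u|^2)^{-1/2}\,du}, \]
and the substitution $u = \sin\theta\, v$ turns the numerator into $\sin^{n-1}\theta \int_{B_2^{n-1}} (1 - \sin^2\theta\, |v|^2)^{-1/2}\, dv$. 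Since $\sin^2\theta\,|v|^2 \leq |v|^2$, the last integrand is dominated by $(1-|v|^2)^{-1/2}$, so the whole ratio is at most $\tfrac12 \sin^{n-1}\theta$, as needed.

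\textit{Where the difficulty lies.} Nothing here is deep, but the only step requiring care is the constant in the cap estimate: the naive bound obtained by integrating $\sin^{n-2}\varphi$ over the polar angle $\varphi \in [0,\theta]$ loses a spurious factor $1/\cos\theta$, so one really wants the change of variables above (or an equivalent comparison), and the hypothesis $\e < \sqrt 2$, equivalently $\theta < \pi/2$, is precisely what guarantees that the relevant cap lies in an open hemisphere, so that the orthogonal projection there is a bijection.
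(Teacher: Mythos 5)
Your proof is correct and follows exactly the route the paper indicates: a volumetric packing argument for the upper bound (the paper cites Pisier, Lemma 4.10, for this) and, for the lower bound, covering the sphere by caps of angular radius $\theta$ together with the estimate that such a cap has measure at most $\tfrac12 \sin^{n-1}\theta$, which the paper dismisses as ``simple geometric considerations'' and which your projection-onto-the-equator computation supplies cleanly. No discrepancies.
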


The first part of Lemma \ref{lemma:net} is proved by a volumetric argument (see \cite{Pisier89}, Lemma 4.10). The second part follows from the fact
that the proportion of $S^{n-1}$ covered by a spherical cap of angular radius $\theta$ is less than $\frac{1}{2}(\sin \theta)^{n-1}$, as
can be checked by simple geometric considerations.

The unit sphere in $\C^m$ is denoted by $S_{\C^m}$. Since $S_{\C^m}$ identifies with $S^{2m-1}$ as a metric space, the results from Lemma
\ref{lemma:net} also apply. 
We denote by $\cM_m$ the algebra of complex $m \times m$ matrices, which we identify with operators on $\C^m$.
We use Dirac bra-ket notation. In particular, 
 if $\psi \in S_{\C^m}$, then $\ketbra{\psi}{\psi}$ denotes the rank 1 orthogonal projection onto $\C \psi$. 
 The inner product of
vectors $\psi,\varphi$ is denoted $\braket{\psi}{\varphi}$, and if $A \in \cM_m$ we write $\bra{\psi}A\ket{\varphi}$ for $\braket{\psi}{A(\varphi)}$. Such notation leads to
visually pleasant formulas such as $\tr ( \ketbra{\psi}{\psi} A ) = \bra{\psi}A \ket{\psi}$.
A fundamental object, which we denote by $\cD(\C^m)$ 
or simply by $\cD$ when the context is clear, is the set of (mixed) states on $\C^m$ defined as 
\begin{eqnarray*} 
\cD (\C^m) & \coloneqq & \{ \rho \in \cM_m \st \rho = \rho^\dagger,\ \rho \geq 0,\ \tr \rho = 1 \} 
= \conv \{ \ketbra{\psi}{\psi} \st \psi \in S_{\C^m} \}.
\end{eqnarray*}
States of the form $\ketbra{\psi}{\psi}$ are called pure states and coincide with the set of extreme points of $\cD$. 
We call maximally mixed state the state $\rho_*\coloneqq\Id/m$, where $\Id$ is the identity matrix.
When $\C^m$ is identified
with the tensor product $\C^d \otimes \C^d$ (with $m=d^2$), we denote by $\cS(\C^d \otimes \C^d)$, or simply $\cS$, 
the subset of $\cD$ formed by separable states:
\[ \cS \coloneqq \conv \left\{ \ketbra{\psi \otimes \vphi}{\psi \otimes \vphi} \st \psi,\vphi \in S_{\C^d} \right\}. \]
States which are not separable are called entangled.
Both $\cD$ and $\cS$ live in the affine space
\begin{equation} \label{eq:def-H}
H = \{ A \in \cM_m  \st A=A^\dagger, \ \tr A = 1 \}. 
\end{equation}
In order to use tools from geometry of normed spaces (a.k.a.\ \emph{asymptotic geometric analysis}), we consider $H$ as a vector space whose origin 
is the maximally mixed state $\rho_*=\Id/m$. 
We use $\bullet$ to denote scalar multiplication in $H$ when thought of 
as a vector space, i.e.,
for $\rho \in H$ and $t \in \R$,
\begin{equation} \label{eq:bullet} 
t \bullet \rho \coloneqq t \rho + (1-t) \rho_* .
\end{equation}
If $K \subset H$, then denote $t \bullet K = \{ t \bullet x \st x \in K \}$.
We will repeatedly use the following fact: for convex bodies $K$, $L$ in $H$ and $t \geq 0$, the inclusion 
$t \bullet K \subset L$ is equivalent to the inequality
\begin{equation}
\label{eq:tKsubsetL}
 t \sup_{\rho \in K} \tr (A \rho) \leq \sup_{\rho \in L} \tr (A \rho)
\end{equation}
holding for every trace zero Hermitian matrix $A$.

We equip $H$ with the Hilbert--Schmidt (a.k.a. Frobenius) norm 
 $\|A\|_{\HS}=(\tr A^2)^{1/2}$ inherited from 
$\cM_m$, so that the unit ball is
$B_{\HS} \coloneqq \{ A \in H \st \|A-\rho_*\|_{\HS} \leq 1 \}$. 
Denote also by $\|\cdot\|_{\op}$ the usual operator (or spectral) norm and by 
$\|A\|_{\textnormal{Tr}} = \tr ((AA^\dagger)^{1/2} )$ the trace-class norm. 

\section{Verticial and facial dimension of convex sets} \label{section:dim}

Let $K \subset \R^n$ be a convex body  containing $0$ in the interior. Fix a number $A>1$, our resolution parameter. 
All polytopes we consider are convex.
Define the \emph{verticial dimension} of $K$ as
\[ \dim_V(K,A) \coloneqq \log
\inf \{ N : \textnormal{ there is a polytope }P\textnormal{ with }N\textnormal{ vertices s.t.\ } K \subset P \subset AK \}
\]
and the \emph{facial dimension} of $K$ as
\[ \dim_F(K,A) \coloneqq \log
\inf \{ N : \textnormal{ there is a polytope }Q\textnormal{ with }N\textnormal{ facets s.t.\ } K \subset Q \subset AK \},
\]
where by facets we mean faces of dimension $n-1$. We set the resolution parameter $A$ as a default value equal
to $4$ and write 
\[\dim_V(K)\coloneqq\dim_V(K,4) \ \textnormal{  and  } \ \dim_F(K)\coloneqq\dim_F(K,4).\] 
All the results below are only affected in the values of the numerical constants (implicit in the notation $O(\cdot)$, $\Theta(\cdot)$ and $\Omega(\cdot)$) 
if $4$ is replaced by another number larger that $1$. 
However, the character of the dependence on $A$ will be important in some applications.  

We note that the above concepts are linear invariants in the following sense: $\dim_V(TK) = \dim_V(K)$ and 
$\dim_F(TK) = \dim_F(K)$ for any $T \in \gGL(n,\R)$. Moreover, there are dual to each other: if we define the
\emph{polar} of a convex body 
$K \subset \R^n$ (say, containing $0$ in the interior) as the convex body
\[ K^\circ = \{ x \in \R^n \st \scalar{x}{y} \leq 1 \textnormal{ for all } y \in K \} ,\]
then 
\[
\dim_V(K^\circ) = \dim_F(K)  \textnormal{  and }  \dim_F(K^\circ) = \dim_V(K) .
\]
 Indeed, $P$ is a polytope with $N$ facets
if and only if $P^\circ$ is a polytope with $N$ vertices.

We also note that if $E \subset \R^n$ is a linear subspace, 
$\dim_F(K \cap E) \leq \dim_F K$ and $\dim_V(P_E K) \leq \dim_V K$ where
$P_E$ denotes the orthogonal projection onto $E$. 
These inequalities reflect the fact that projections do not increase the number of vertices
of polytopes, while sections do not increase the number of facets.

For any convex body $K \subset \R^n$ which is $0$-symmetric 
(i.e., such that $K=-K$), we have $\dim_V(K) = O(n)$ and
$\dim_F(K)=O(n)$ by a standard volumetric argument (see, e.g., Lemma 1 in \cite{AubrunLancien15}). 
This fails in complete generality without the symmetry assumption, but for wrong reasons: 
consider the case of a disk in $\R^2$ which contains
the origin \emph{very}  close to its boundary. 
If we insist that, for example, $K$ has centroid at the origin, then the inequalities 
$\dim_V(K) = O(n)$ and $\dim_F(K) = O(n)$ still hold, but this is
less obvious than in the symmetric case 
(see \cite{Bronstein08,Barvinok14,LRT14,Szarek14,book} for this and related questions).  

Define also the \emph{asphericity} of a convex body $K \subset \R^n$ as 
\begin{equation}
\label{eq:def-a} 
 a(K) = \inf \left\{ \frac{R}{r} \st \textnormal{there is a 0-symmetric ellipsoid } \mathcal{E} \textnormal{ with }
r \mathcal{E} \subset K \subset R \mathcal{E} 
 \right\}. 
 \end{equation} 
 
The reader will notice that, arguably, it would be more natural and more functorially sound to define 
$\dim_F(\cdot)$, $\dim_V(\cdot)$ 
and $a(\cdot)$ with an additional infimum over all translates of $K$: we would end up then with 
\emph{affine} invariants (and not just \emph{linear} invariants).  
However, this is not necessary in our setting and would in fact lead to complications in duality considerations.   
 
We will use in a fundamental way the following result, which appears in \cite{FLM77} only implicitly (see the paragraphs preceding
Example 3.5). We also 
make explicit the dependence on resolution parameters.

\begin{theorem} \label{theorem:FLM}
For any convex body $K \subset \R^n$ containing the origin in the interior we have 
\begin{equation} \label{eq:FLM-4} 
\dim_F(K) \dim_V(K) \, a(K)^2 =\Omega(n^2) .
\end{equation}
More generally, if $A,B > 1$, then 
\begin{equation} \label{eq:FLM} 
A^2  \dim_F(K,A) \cdot B^2 \dim_V(K,B) \cdot \, a(K)^2 =\Omega(n^2) .
\end{equation}
\end{theorem}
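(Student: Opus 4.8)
The plan is to obtain \eqref{eq:FLM} by combining two elementary ingredients: a sub-Gaussian bound for spherical averages of maxima of linear functionals (the quantitative form of concentration on $S^{n-1}$, where the Dvoretzky-type behaviour enters), and the ``reverse Urysohn'' inequality $M(L)\,w(L)\ge 1$, which chains the facet data and the vertex data through one mixed quantity. By $\gGL(n,\R)$-invariance of $\dim_V$, $\dim_F$ and $a$ I may assume $B_2^n\subseteq K\subseteq\alpha B_2^n$ with $\alpha$ as close to $a(K)$ as desired; assuming (the claim being trivial otherwise) that $\dim_F(K,A),\dim_V(K,B)<\iy$, fix a polytope $Q=\{x:\langle z_j,x\rangle\le 1,\ 1\le j\le N_F\}$ with $K\subseteq Q\subseteq AK$ and $\log N_F=\dim_F(K,A)$, and a polytope $P=\conv(v_1,\dots,v_{N_V})$ with $K\subseteq P\subseteq BK$ and $\log N_V=\dim_V(K,B)$. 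For a convex body $L$ with $0$ in the interior write $\|\cdot\|_L$ for its gauge, $h_L$ for its support function, and set $M(L)=\int_{S^{n-1}}\|\theta\|_L\,d\sigma$, $w(L)=\int_{S^{n-1}}h_L\,d\sigma$ with $\sigma$ the uniform probability measure on $S^{n-1}$. Since $\theta/\|\theta\|_L\in\partial L$, one has $h_L(\theta)\ge\langle\theta/\|\theta\|_L,\theta\rangle=1/\|\theta\|_L$ on $S^{n-1}$, hence $h_L(\theta)\,\|\theta\|_L\ge 1$ pointwise, hence $M(L)\,w(L)\ge 1$ by Cauchy--Schwarz.

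The next step is to bound $M(P)$ from above using the facets of $Q$. The inclusions $K\subseteq P$ and $\tfrac1A Q\subseteq K$ (the latter being a restatement of $Q\subseteq AK$) give $\|\theta\|_P\le\|\theta\|_K\le A\|\theta\|_Q$ on $S^{n-1}$. Since $B_2^n\subseteq Q$ each $z_j$ satisfies $|z_j|\le 1$, and $\|\theta\|_Q=\max_j\langle z_j,\theta\rangle$; the standard estimate $\int_{S^{n-1}}\max_{j\le N}\langle y_j,\theta\rangle\,d\sigma\le c\,(\max_j|y_j|)\sqrt{(\log N)/n}$ then gives $M(P)\le A\,M(Q)\le cA\sqrt{(\log N_F)/n}$.

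Symmetrically I would bound $w(P)$ from above using the vertices of $P$. As $P\subseteq BK\subseteq B\alpha B_2^n$, every vertex has $|v_i|\le B\alpha$, and $h_P(\theta)=\max_i\langle v_i,\theta\rangle$ (nonnegative since $0$ is interior to $P$), so the same estimate gives $w(P)\le cB\alpha\sqrt{(\log N_V)/n}$. Substituting into $M(P)\,w(P)\ge 1$ yields
\[
1\le M(P)\,w(P)\le c^2 AB\,\alpha\,\frac{\sqrt{\dim_F(K,A)\,\dim_V(K,B)}}{n},
\]
so that $A^2\dim_F(K,A)\cdot B^2\dim_V(K,B)\cdot\alpha^2\ge n^2/c^4$; letting $\alpha\downarrow a(K)$ gives \eqref{eq:FLM}, and \eqref{eq:FLM-4} is the special case $A=B=4$.

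The only genuine point is the chaining. Estimating $P$ and $Q$ in isolation --- e.g.\ via $M(Q)\,w(Q)\ge 1$ and the trivial bound $w(Q)\le A\alpha$ --- gives only $\dim_F(K),\dim_V(K)=\Omega\!\bigl(n/a(K)^2\bigr)$ separately, which is far too weak when $K$ is far from Euclidean: for $K=B_1^n$ this says essentially nothing, whereas $\dim_F(B_1^n)=\Theta(n)$. Routing the argument through the mixed product $M(P)\,w(P)$ --- with the gauge side governed by the $N_F$ facets of $Q$ and the width side by the $N_V$ vertices of $P$ --- is exactly what produces the product $\dim_F\cdot\dim_V$ and leaves a single power of $a(K)^2$ uncancelled. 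Beyond this, the only estimate requiring care is the spherical sub-Gaussian bound, which must be applied uniformly in $N$ (harmlessly so: when $N$ is super-exponential in $n$ it degenerates into the trivial $w(P)\le\max_i|v_i|\le B\alpha$).
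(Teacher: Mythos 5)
Your proof is correct, but it takes a genuinely different route from the paper's. The paper derives the inequality from the Dvoretzky--Milman theorem: after normalizing so that $rB_2^n\subset K\subset RB_2^n$ with $R/r=a(K)$, it extracts an almost-Euclidean section of $K$ of dimension $k=\Omega((rM)^2n)$, invokes the covering-based fact that a $k$-dimensional Euclidean ball cannot be $A$-approximated by a polytope with fewer than $\exp(\Omega(k/A^2))$ facets (Lemma~\ref{lemma:ball}), repeats the argument for $K^\circ$, and multiplies the two bounds using $MM^*\ge 1$. You bypass the existence of spherical sections entirely: you apply the Cauchy--Schwarz duality $M(L)\,w(L)\ge1$ (which is the same $MM^*\ge1$, since $w(P)=M(P^\circ)$) directly to the approximating vertex-polytope $P$, bound $M(P)\le M(K)\le A\,M(Q)$ through the facet-polytope $Q$, and control both $M(Q)$ and $w(P)$ by the first-moment sub-Gaussian estimate $\E\max_{j\le N}\langle y_j,\theta\rangle\le c\,(\max_j|y_j|)\sqrt{(\log N)/n}$. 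This is more elementary --- no L\'evy concentration, no union bound over a net, no Lemma~\ref{lemma:ball} --- and yields the same dependence on $A$, $B$ and $a(K)$; what it does not produce is the geometric byproduct of the paper's route, namely the nearly spherical sections themselves, which the paper presents as the conceptual source of the inequality. Two small points to keep honest: the expected-maximum bound requires $N\ge2$ (harmless, since any polytope sandwiching $K$ has at least $n+1$ vertices and $n+1$ facets, so $\log N_F,\log N_V\ge\log 2$), and the identity $\|\theta\|_Q=\max_j\langle z_j,\theta\rangle$ uses that $Q$ is bounded with $0$ in its interior, which holds here because $0\in\mathrm{int}(K)\subset Q\subset A\alpha B_2^n$. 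Your closing observation --- that estimating $P$ and $Q$ in isolation only gives $\Omega(n/a(K)^2)$ for each dimension separately, and that the product structure comes precisely from routing the gauge side through $Q$ and the width side through $P$ inside a single application of $M(L)\,w(L)\ge1$ --- correctly identifies the crux and mirrors the role that $MM^*\ge1$ plays in the paper's argument.
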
 

Theorem \ref{theorem:FLM} is fairly sharp for many  convex bodies, as can be seen from Table \ref{table} below. 
Moreover, for all those examples it is enough to consider in \eqref{eq:def-a}  
the appropriate Euclidean balls rather than general ellipsoids $\mathcal{E}$. 
In a nutshell, our argument to prove Theorems \ref{theorem:main} and \ref{theorem:main}'
-- which is presented in the next section -- combines the upper bound on $\dim_F(\cD)$ 
with the lower bound on $\dim_F(\cS)$,
 the latter being obtained as a consequence of Theorem \ref{theorem:FLM}.
The proofs of these bounds are relegated to Sections \ref{section:dim-DS} and \ref{section:dimension-D-random}.

\begin{table}[htbp]
\caption{\label{table} Parameters appearing in \eqref{eq:FLM-4} for some families of convex bodies, 
see Section \ref{section:dim-DS} for justifications and references. The main technical points of this  
paper are the estimates from the last two rows on the verticial and facial dimensions of the set of states (Theorem \ref{theorem:dimension-D}) and of the
set of separable states (Theorem \ref{theorem:dimension-S}). 
The bounds implicit in the first three rows are either trivial or well-known and are included here mostly 
to provide reference points. }
\begin{center}
\begin{tabular}{|c|c|c|c|c|}
  \hline
  $K$ & dimension & $a(K)$ & $\dim_V(K)$ & $\dim_F(K)$ \\
  \hline
  $\vphantom{\displaystyle \sum} \textnormal{Euclidean ball } B_2^n$ & $n$ & $1$ & $\Theta(n)$ & $\Theta(n)$ \\
  \hline
  $\vphantom{\displaystyle \sum} \textnormal{Cube } [-1,1]^n$ & $n$ & $\sqrt{n}$ & $\Theta(n)$   & $\Theta(\log n)$ \\
  \hline
  $\vphantom{\displaystyle \sum} \textnormal{Simplex in }\R^n$ & $n$ & $n$ & $\Theta(\log n)$ & $\Theta(\log n)$ \\
  \hline
  $\vphantom{\displaystyle \sum} \textnormal{Set of quantum states } \cD(\C^m)$ & $m^2-1$ & $m-1$ & $\Theta(m)$ & $\Theta(m)$ \\
  \hline
  $\vphantom{\displaystyle \sum} \textnormal{Set of separable states } \cS(\C^d \otimes \C^d)$ & $d^4-1$ & $d^2-1$ & $\Theta(d \log d)$ & $\Omega(d^3/\log d)$ \\
  \hline
  \end{tabular}
\end{center}
\end{table}

As noted earlier, the values of the invariants appearing in Table \ref{table} depend crucially 
on the location of the origin, which is not canonical for non-centrally-symmetric bodies. 
For the simplex, we assume the origin to coincide with the centroid. 
In particular, if we  think of the $n$-dimensional simplex
as the set of classical states (i.e., probability measures) on \ $n+1$ points, 
the role of the origin is played by the uniform probability measure with weights 
$\big(\frac{1}{n+1},\cdots,\frac{1}{n+1}\big)$. 
This is analogous to the quantum case, 
where the maximally mixed state $\rho_*$ 
is considered as the origin; the choice being implicit in definition \eqref{eq:bullet} of the operation $\bullet$.  

For future reference, we point out that the set
$\cD=\cD(\C^m)$ of states on $\C^m$ satisfies the relation
\begin{equation} \label{eq:D-polar} \cD^\circ = (-m) \bullet \cD \end{equation}
(polarity in $H$,  with $\rho_*$ as the origin).  This is a consequence
of the  self-duality of the cone of positive semi-definite matrices. 
It follows in particular that $\dim_F(\cD)=\dim_V(\cD)$.

\section{Proof of the main theorems} \label{section:proof}

In this section we prove Theorems \ref{theorem:main} and \ref{theorem:main}' 
as consequences of the estimates appearing in Table \ref{table}. 
In particular, we rely on a lower bound on the facial dimension of the set of separable states 
(Theorem \ref{theorem:dimension-S})
that will be proved in Section \ref{section:dim-DS}.

In what follows, the symbol $\cD$ will always stand for the set of states on 
$\cH=\C^d \otimes \C^d$ and  $\cS \subset \cD$ for
the corresponding set of separable states. Next, $B^{\sa} = B^{\sa}(\cH)$ 
will denote the space of self-adjoint operators on $\cH$, while 
$\PSD \subset B^{\sa}$ will be the cone of positive semidefinite operators on $\cH$. 
Let $\{\Phi_1,\dots,\Phi_N\}$ be a family of $N$ positive maps on $\cM_d$ 
which satisfies the hypothesis of Theorem \ref{theorem:main} or Theorem \ref{theorem:main}'.
This is equivalent to the right hand side  inclusion in 
\begin{equation} \label{eq:witness-intersection2}
\cS \  \subset  \ \ \bigcap_{i=1}^N 
 \left\{ \rho \in \cD \st (\Phi_i \otimes \I)(\rho) \in \PSD \right\} \ \subset \  A \bullet \cS.
\end{equation}
where either $A=2$ (Theorem \ref{theorem:main}) or 
$A=1/\e_d=c_0^{-1} \sqrt{d}/\log d$ (Theorem \ref{theorem:main}'). 
[The  left hand side  inclusion is the easy part of the Horodecki criterion.] 
The value of the (universal positive) constant $c_0$ will be determined at the end of the proof. 

The idea of the argument is to show that each of the sets appearing under 
the intersection in \eqref{eq:witness-intersection2} can be well-approximated by a polytope 
with ``not too many'' facets.   
Since the number of facets of a polytope is subadditive under intersections, 
this leads to an upper bound on the facial dimension of $\cS$. 
Finally, we can compare this upper bound with the corresponding estimate from Table \ref{table}, 
which will lead to a lower bound on $N$. 

To that end, we note first that we can assume that $\Phi_i(\Id)$ is invertible for every $i$. 
Indeed, if this is not the case, denote by $E \subsetneq \C^d$ 
the range of $\Phi_i(\Id)$ (which is a positive operator) and replace $\Phi_i$ by  
$\tilde{\Phi}_i : X \mapsto \Phi_i(X) + P_{E^\perp} X P_{E^\perp}$. 
The map $\tilde{\Phi}_i$ is clearly positive and has the property that, for any state 
$\rho$  on $\C^d \otimes \C^d$, we have 
\[ (\Phi_i \otimes \I)(\rho) \in \PSD \iff (\tilde{\Phi}_i \otimes \I)(\rho) \in \PSD .\] 
(The key point in inferring the latter is that positivity of $\Phi_i$ implies then that, 
for any $X \in \cM_d$, the range of $\Phi_i(X)$ is contained in $E$.)  Further, we can 
also assume that $\Phi_i$ is unital (i.e., that $\Phi_i(\Id) = \Id$) 
by considering the map $X \mapsto \Phi_i(\Id)^{-1/2} \Phi_i(X) \Phi_i(\Id)^{-1/2}$. 

Next, we prove a simple lemma about approximability of $\cD$ by polytopes. It does not imply the result
stated in Table \ref{table} but is sufficient for our present purposes.

\begin{lemma} \label{lemma:approximation-D}
Let $\mathcal{M}$ be a $\delta$-net in $(S_{\C^m},|\cdot|)$. Then 
\begin{equation} \label{eq:approximation-D-bad}
(1-2 m \delta) \bullet  \cD(\C^m) \subset \conv \{ \ketbra{\psi_i}{\psi_i} \st \psi_i \in \mathcal{M} \} \subset \cD(\C^m).
\end{equation}
\end{lemma}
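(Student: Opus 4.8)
The right-hand inclusion is immediate and I would dispatch it in a line: each $\ketbra{\psi_i}{\psi_i}$ with $\psi_i\in\mathcal M$ is a pure state, hence lies in $\cD(\C^m)$, which is convex.

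For the left-hand inclusion, write $P:=\conv\{\ketbra{\psi_i}{\psi_i}\st\psi_i\in\mathcal M\}$. The plan is to apply the duality criterion \eqref{eq:tKsubsetL} with $K=\cD(\C^m)$, $L=P$ and $t=1-2m\delta$ (the statement is of interest only when $2m\delta<1$, which I assume henceforth). For Hermitian $A$ one has $\sup_{\rho\in\cD}\tr(A\rho)=\lambda_{\max}(A)$, the supremum being attained at $\ketbra{\psi}{\psi}$ for $\psi$ a unit top eigenvector of $A$, while $\sup_{\rho\in P}\tr(A\rho)=\sup_{\psi\in\mathcal M}\bra{\psi}A\ket{\psi}$. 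So the desired inclusion reduces to the single scalar inequality
\[ (1-2m\delta)\,\lambda_{\max}(A)\;\leq\;\sup_{\psi\in\mathcal M}\bra{\psi}A\ket{\psi}\qquad\textnormal{for every trace-zero Hermitian }A\in\cM_m, \]
which I would prove directly (it being trivial when $A=0$; note also that, once proved, it shows $\rho_*\in\mathrm{int}\,P$, so $P$ is genuinely a convex body and \eqref{eq:tKsubsetL} applies).

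Now fix such an $A\neq 0$, set $\lambda:=\lambda_{\max}(A)>0$, let $\psi$ be a corresponding unit eigenvector, and choose $\psi_i\in\mathcal M$ with $|\psi-\psi_i|\leq\delta$. The argument then rests on two elementary facts. First, $\|\ketbra{\psi}{\psi}-\ketbra{\psi_i}{\psi_i}\|_{\textnormal{Tr}}\leq 2\delta$, which follows from the rank-two identity $\ketbra{\psi}{\psi}-\ketbra{\psi_i}{\psi_i}=\ketbra{\psi-\psi_i}{\psi}+\ketbra{\psi_i}{\psi-\psi_i}$, or alternatively from $\|\ketbra{\psi}{\psi}-\ketbra{\psi_i}{\psi_i}\|_{\textnormal{Tr}}=2\sqrt{1-|\braket{\psi}{\psi_i}|^2}$ together with $\Re\braket{\psi}{\psi_i}\geq 1-\delta^2/2$. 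Second, $\|A\|_{\op}\leq(m-1)\lambda$, which holds for trace-zero $A$ because $-\lambda_{\min}(A)$ equals the sum of the other $m-1$ eigenvalues, each of which is at most $\lambda$. Feeding these into the identity $\bra{\psi_i}A\ket{\psi_i}=\lambda+\tr\big(A(\ketbra{\psi_i}{\psi_i}-\ketbra{\psi}{\psi})\big)$ and using $|\tr(AB)|\leq\|A\|_{\op}\|B\|_{\textnormal{Tr}}$ gives $\bra{\psi_i}A\ket{\psi_i}\geq\lambda-2\delta(m-1)\lambda\geq(1-2m\delta)\lambda$, as required.

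I do not anticipate a genuine technical obstacle; the one point I would take care to motivate is the recourse to the dual criterion. The naive approach — decompose $\rho\in\cD$ as a convex combination of pure states and replace each by the nearest point of $\mathcal M$ — only shows that $\cD$ lies within Hausdorff (trace-norm) distance $2\delta$ of $P$, not that a shrunk copy of $\cD$ sits inside $P$. Converting the former into the latter is exactly what \eqref{eq:tKsubsetL} does for free, the quantitative input being that $\cD$ contains the trace-norm ball of radius $1/m$ about $\rho_*$ (equivalently, the bound $\|A\|_{\op}\leq(m-1)\lambda_{\max}(A)$ for trace-zero $A$). One could instead iterate the inclusion $\cD\subset P+2m\delta(\cD-\rho_*)$ and sum the resulting geometric series, but the dual route is shorter and makes the constant $2m\delta$ transparent.
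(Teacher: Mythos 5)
Your proof is correct and follows essentially the same route as the paper's: both reduce the inclusion, via the duality criterion \eqref{eq:tKsubsetL}, to the scalar inequality $(1-2m\delta)\lambda_{\max}(A)\leq\sup_{\psi_i\in\mathcal{M}}\bra{\psi_i}A\ket{\psi_i}$ for trace-zero Hermitian $A$, and then control the perturbation of the quadratic form under moving $\psi$ to the nearest net point by $2\delta\|A\|_{\op}$ combined with a bound of the form $\|A\|_{\op}=O(m)\,\lambda_{\max}(A)$. The only differences are cosmetic --- you evaluate directly at the top eigenvector and use the duality $|\tr(AB)|\leq\|A\|_{\op}\|B\|_{\textnormal{Tr}}$ where the paper uses two triangle inequalities followed by an absorption step --- and your slightly sharper estimate $\|A\|_{\op}\leq(m-1)\lambda_{\max}(A)$ yields exactly the fine-tuned factor $1-2(m-1)\delta$ that the paper notes, in the remark following the lemma, can be extracted from its own argument.
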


The reader will notice that the proof given below can be fine-tuned to yield a slightly better 
-- but more complicated -- factor $\big(1-2 (m-1) \delta\big)$ in \eqref{eq:approximation-D-bad}.
\begin{proof}
By \eqref{eq:tKsubsetL}, we have to show that, for any trace zero Hermitian matrix $A$, 
\[ \lambda_1(A) \coloneqq\sup_{\psi \in S_{\C^m}} \bra{\psi} A \ket{\psi} \leq (1-2\delta m)^{-1} \sup_{\psi_i \in \mathcal{M}} 
\bra{\psi_i} A \ket{\psi_i}. \]
Since $A$ has trace $0$, we have $\|A\|_{\op} \leq m \lambda_1(A)$. 
Given $\psi \in S_{\C^m}$, there is $\psi_i \in \mathcal{M}$ with $|\psi-\psi_i| \leq \delta$. 
By the triangle inequality, we have
\begin{eqnarray} \label{eq:netD1} \bra{\psi} A \ket{\psi} & \leq & \delta \|A\|_{\op} + \bra{\psi} A \ket{\psi_i} \\
\label{eq:netD2} & \leq & 2\delta \|A\|_{\op} + \bra{\psi_i} A \ket{\psi_i} \\
& \leq  &2\delta m\lambda_1(A)  + \bra{\psi_i} A \ket{\psi_i}. 
\end{eqnarray}
Taking supremum over $\psi$, we get 
$\lambda_1(A)  \leq 2\delta m\,\lambda_1(A)  + \sup \{ \bra{\psi_i} A \ket{\psi_i} \st \psi_i \in \mathcal{M} \}$ 
and the result follows.
\end{proof}

We now set $\e=1/(1+d)$ and 
$\delta = \e/2d^2$, and choose  a $\delta$-net $\cN$ in $S_{\C^d \otimes \C^d}$. By
Lemma \ref{lemma:net}, we may assume that $\log \card \cN = O( d^2 \log d)$.
We know from Lemma \ref{lemma:approximation-D}
that $(1-\e) \bullet \cD \subset Q \subset \cD$, where $Q$ is the polytope
\[ Q = \conv \{ \ketbra{\psi}{\psi} \st \psi \in \cN \} .\]
It follows from \eqref{eq:D-polar} that the polytope $P \coloneqq (-(1-\e)d^{-2}) \bullet Q^\circ$, 
which has at most $\card \cN$ facets, satisfies 
\begin{equation} \label{eq:P-vs-D} (1-\e) \bullet \cD \subset P \subset \cD. \end{equation} 

It is now instructive to complete the argument under the additional assumption that each 
$\Phi_i$ is also trace-preserving. 
Since $\Phi_i \otimes \I$ is then likewise trace-preserving, 
the condition $(\Phi_i \otimes \I)(\rho) \in \PSD$  from \eqref{eq:witness-intersection2} 
is equivalent to $\rho \in  (\Phi_i \otimes \I)^{-1}(D)$ and so, in view of \eqref{eq:P-vs-D},
\[
(\Phi_i \otimes \I)^{-1}(P) \ \subset \    \left\{ \rho \st (\Phi_i \otimes \I)(\rho) \in \PSD \right\}  \ \subset \  
(1-\e)^{-1} (\Phi_i \otimes \I)^{-1}(P).
\]
This shows that we succeeded in approximating the sets from \eqref{eq:witness-intersection2} 
 by polyhedra with $\exp\big(O( d^2 \log d)\big)$ facets, as required for the heuristics 
we sketched earlier.  Note that the additional constraint $\rho \in D$ can be handled 
in a formal way by adding to the family $\{\Phi_i\}$ the map $\Phi_0=\I$, 
and that $\Phi_i$ being unital translates to $(\Phi_i\otimes \I)(\rho_*)=\rho_*$, which  
assures that we are $\bullet$-dilating all sets with respect to the same point. 

\smallskip The general case requires some tweaking:  we need to be able to control how far 
$\Phi_i$ and $\Phi_i\otimes \I$ are from being trace-preserving. 
We will use the following 

\begin{lemma} \label{lemma:unital-not-too-wild}
Let $\Phi : \cM_d \to \cM_d$ be a positive unital map. Then, for any $\rho \in \cD$,
\[ 0 \leq \tr \left[ (\Phi \otimes \I) \rho \right] \leq d. \] 
\end{lemma}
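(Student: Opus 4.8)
The plan is to strip off the identity factor by a partial trace. Write $\rho_A \coloneqq \tr_2 \rho$ for the reduced state of $\rho$ on the first copy of $\C^d$ (so $\rho_A \in \cD(\C^d)$), where $\tr_2$ denotes the partial trace over the second tensor factor. The first step is the identity
\[ \tr\bigl[(\Phi \otimes \I)(\rho)\bigr] \ = \ \tr\bigl[\Phi(\rho_A)\bigr]. \]
To see it, expand $\rho = \sum_{i,j} E_{ij} \otimes \rho_{ij}$ in the matrix units $E_{ij}$ of the first factor; then $(\Phi \otimes \I)(\rho) = \sum_{i,j} \Phi(E_{ij}) \otimes \rho_{ij}$, and taking the full trace contracts the second leg, giving $\sum_{i,j} \tr\Phi(E_{ij}) \, \tr\rho_{ij} = \tr\Phi\bigl(\sum_{i,j} E_{ij}\,\tr\rho_{ij}\bigr) = \tr\Phi(\rho_A)$, since $\rho_A = \sum_{i,j} E_{ij}\,\tr\rho_{ij}$. (Equivalently, $\tr_2$ intertwines $\Phi \otimes \I$ with $\Phi$.)

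Given this, both bounds are immediate consequences of the order structure. Since $\rho \geq 0$ we have $\rho_A \geq 0$, hence $\Phi(\rho_A) \geq 0$ by positivity of $\Phi$, and therefore $\tr\Phi(\rho_A) \geq 0$, which is the lower bound. For the upper bound, observe that $\rho_A$ is a state, so each of its eigenvalues lies in $[0,1]$ and thus $0 \leq \rho_A \leq \Id$. Applying $\Phi$ and using positivity once more (so that $\Phi(\Id - \rho_A) \geq 0$) together with unitality $\Phi(\Id) = \Id$, we obtain $0 \leq \Phi(\rho_A) \leq \Id$, whence $\tr\Phi(\rho_A) \leq \tr \Id = d$.

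There is no real obstacle here: the only ingredient specific to this lemma is the partial-trace identity, and everything after it is the standard fact that a positive unital map is order-bounded between $0$ and $\Id$ on the set of states. If one prefers to avoid partial traces, an equivalent route is to pass to the Hilbert--Schmidt adjoint $\Phi^*$, which is again positive: one has $\tr[(\Phi \otimes \I)(\rho)] = \tr[\rho\,(\Phi^*(\Id) \otimes \Id)]$, and since $\Phi^*(\Id) \geq 0$ with $\tr \Phi^*(\Id) = \tr \Phi(\Id) = d$, it follows that $\|\Phi^*(\Id)\|_{\op} \leq d$, so that $0 \leq \tr[\rho\,(\Phi^*(\Id) \otimes \Id)] \leq \|\Phi^*(\Id)\|_{\op}\,\tr\rho \leq d$.
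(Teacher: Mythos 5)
Your proof is correct. It takes a mildly but genuinely different route from the paper's: the paper reduces to pure states $\rho=\ketbra{\psi}{\psi}$ by extremality of linear forms, writes the Schmidt decomposition $\psi=\sum_i\lambda_i e_i\otimes f_i$, and computes $\tr\left[(\Phi\otimes\I)\rho\right]=\sum_i\lambda_i^2\tr\Phi(\ketbra{e_i}{e_i})$, bounding each term by $d$ via $\Phi(\ketbra{e_i}{e_i})\leq\Phi(\Id)=\Id$. Your partial-trace identity $\tr\left[(\Phi\otimes\I)\rho\right]=\tr\Phi(\tr_2\rho)$ is what that computation produces in the pure case (the reduced state of $\ketbra{\psi}{\psi}$ is $\sum_i\lambda_i^2\ketbra{e_i}{e_i}$), so the key inequality is the same in both arguments: a positive unital map carries the order interval $[0,\Id]$ into itself. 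What your version buys is that it dispenses with the reduction to extreme points and with the Schmidt decomposition altogether, and it makes the lower bound transparent --- which is worth being careful about, since for entangled $\rho$ the operator $(\Phi\otimes\I)(\rho)$ itself need not be positive semi-definite, so one cannot simply invoke nonnegativity of the trace of a positive operator; your reduction to $\tr\Phi(\rho_A)$ with $\rho_A\geq 0$ handles this cleanly. Your alternative argument via the Hilbert--Schmidt adjoint, using $\Phi^*(\Id)\geq 0$ and $\|\Phi^*(\Id)\|_{\op}\leq\tr\Phi^*(\Id)=\tr\Phi(\Id)=d$, is also valid.
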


\begin{proof}
Since linear forms achieve their extrema on extreme points of convex compact sets, we may assume that 
$\rho= \ketbra{\psi}{\psi}$ is pure. Let $\psi = \sum \lambda_i e_i \otimes f_i$ the Schmidt decomposition of
$\psi$.  Then, by direct calculation,  
\[ \tr \left[ (\Phi \otimes \I) \rho \right] = \sum_{i=1}^d \lambda_i^2 \tr \Phi( \ketbra{e_i}{e_i}) \leq d , \]
 the last inequality following from 
$\sum \lambda_i^2 =1$ and from $\Phi ( \ketbra{e_i}{e_i} ) 
\leq \Phi(\Id) = \Id$.
\end{proof}

Returning to the proof of the theorems, we denote the convex bodies appearing in \eqref{eq:witness-intersection2} by  
\begin{equation} \label{eq:def-K_i}
K_i \coloneqq  \left\{ \rho \in \cD \st (\Phi_i \otimes \I)(\rho) \in \PSD \right\} = \cD \cap (\Phi_i \otimes \I)^{-1}(\PSD) 
\end{equation}
(note that $\rho_* \in K_i$) and define the polyhedral cones
\begin{equation} \label{eq:def-C_i}
 \mC_i \coloneqq \left\{ M \in B^{\sa}
\st (\Phi_i \otimes \I)(M) \in \R_+ P \right\}. 
\end{equation}
We now claim that
\begin{equation} \label{eq:approx-K_i}
\frac 12 \bullet K_i \subset P \cap \mC_i \subset K_i. 
\end{equation} 

Before proving the claim, let us first show how it implies the Theorems. 
Combining \eqref{eq:approx-K_i} and \eqref{eq:witness-intersection2} we obtain
\[
\frac 12 \bullet  \cS \  \subset  \ \bigcap_{i=1}^N \left(\frac 12 \bullet  K_i \right)  \  \subset  \  \bigcap_{i=1}^N \left(P \cap \mC_i \right)  \  =  \ P\, \cap \, \bigcap_{i=1}^N \mC_i 
\ \subset \ \bigcap_{i=1}^N K_i  \ \subset \  A \bullet \cS .
 \]
 The polytope $R = P \cap \bigcap_{1 \leq i \leq N} \mC_i $ 
 has at most $f\coloneqq(N+1) \exp(C d^2 \log d)$ facets 
 (i.e., $N+1$ times the number of facets of $P$), 
 where $C$ is the constant implicit in the notation $\log \card \cN = O( d^2 \log d)$. 
Consequently, by the definition of the facial dimension, 
 we must have $\log f \geq \dim_F(\cS,2A)$ and so 
\[ \log(N+1) + C d^2 \log d = \log f \geq \dim_F(\cS,2A) \geq c d^3 A^{-2} / \log d ,  \]
where the last inequality will be proved as Theorem \ref{theorem:dimension-S} in Section \ref{section:dim-DS} (for $A=2$, this is exactly the statement
from Table \ref{table}). 
In the case of Theorem \ref{theorem:main} ($A=2$), the conclusion is immediate. 
In the case of Theorem \ref{theorem:main}'
($A=c_0^{-1} \sqrt{d}/\log d$), we choose $c_0 = \sqrt{2C/c}$ and conclude that
$ \log (N+1) \geq C d^2 \log d$, as asserted.

It remains to prove the claim \eqref{eq:approx-K_i}. 
The second inclusion is immediate from the definitions and from \eqref{eq:P-vs-D}. 
For the first inclusion, 
it is clearly enough to show that $\frac 12 \bullet K_i \subset  \mC_i$. 
To that end, let $\rho \in K_i$ and denote 
$t = \tr \left[ (\Phi_i \otimes \I) \rho \right] \geq 0$. 
We now consider two cases. 
First, if  $t=0$, then -- since  $(\Phi_i \otimes \I)(\rho)$ is a positive operator -- 
we must have $(\Phi \otimes \I)(\rho)=0$.  Hence 
 trivially $\rho  \in \mC_i$ and, \emph{a fortiori}, 
 $\frac{1}{2} \bullet \rho \in  \mC_i$.  
If $t>0$,  we note that $t^{-1} (\Phi_i \otimes \I)(\rho) \in \cD$ and that, 
by Lemma \ref{lemma:unital-not-too-wild}, we have $ t \leq d$,
and therefore $\frac{t}{1+t} = 1-\frac{1}{1+t} \leq 1-\frac{1}{1+d} = 1-  \e$. 
It thus follows from \eqref{eq:P-vs-D} that
\[ \frac{t}{1+t}  \bullet t^{-1} (\Phi_i \otimes \I)(\rho )\ \in \  \frac{t}{1+t}  \bullet D \ \subset \ (1-\e) \bullet D \ \subset \ P .\]
It remains to notice that 
\[\frac{t}{1+t}  \bullet t^{-1} (\Phi_i \otimes \I)(\rho ) = 
\frac{(\Phi_i \otimes \I)(\rho ) +\rho_*}{1+t} = \frac{2}{1+t}   ( \Phi_i \otimes \I )\left( \frac{\rho + \rho_*}{2} \right) ,\]
which means that we showed that $( \Phi_i \otimes \I )\left( \frac{1}{2} \bullet \rho \right) \in \frac{1+t}{2} \, P$. 
In particular (cf.\ \eqref{eq:def-C_i}), $\frac{1}{2} \bullet \rho \in \mC_i$, as needed.  

\section{Connection with Dvoretzky's theorem} \label{section:dvoretzky}

Theorem \ref{theorem:FLM} is actually a consequence of Milman's version of Dvoretzky's theorem \cite{Dvoretzky61,Milman71},  which gives a sharp formula for
the dimension of almost spherical sections of convex bodies, together with the fact that the Euclidean ball has large facial and verticial dimensions, as shown by
the following lemma. We make explicit the dependence on the resolution parameter.

\begin{lemma} \label{lemma:ball} 
For $n \geq 1$ and $A>1$, we have $\dim_V(B_2^n,A) =\dim_F(B_2^n,A)\geq \frac{n-1}{2A^2}$. 
\end{lemma}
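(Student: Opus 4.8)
The plan is to first collapse the two quantities into one via polarity, and then lower-bound the verticial dimension by a covering argument on the sphere.

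Since $(B_2^n)^\circ = B_2^n$, the equality $\dim_V(B_2^n,A) = \dim_F(B_2^n,A)$ will follow from self-polarity, once one checks that the duality relation noted in Section~\ref{section:dim} persists with the resolution parameter carried through unchanged: if $P$ is a polytope with $N$ facets and $K \subset P \subset AK$ (with $0$ interior to $K$), then polarizing gives $\tfrac1A K^\circ \subset P^\circ \subset K^\circ$, i.e.\ $K^\circ \subset A P^\circ \subset A K^\circ$, and $P^\circ$ is a polytope with $N$ vertices; hence $\dim_F(K,A) = \dim_V(K^\circ,A)$ for every $A>1$, and for $K = B_2^n$ the two sides coincide. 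So it remains to prove $\dim_V(B_2^n,A) \ge \tfrac{n-1}{2A^2}$.

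For that, let $P = \conv\{v_1,\dots,v_N\}$ be any polytope with $B_2^n \subset P \subset A B_2^n$ (discard any $v_i = 0$). The outer inclusion forces $|v_i| \le A$. The inner inclusion makes every $\theta \in S^{n-1}$ a convex combination $\theta = \sum_i \lambda_i v_i$, so that
\[ 1 = |\theta|^2 = \sum_i \lambda_i \scalar{v_i}{\theta} \le \max_i \scalar{v_i}{\theta}; \]
thus some $i$ has $\scalar{v_i}{\theta} \ge 1$, hence $\scalar{v_i/|v_i|}{\theta} \ge 1/|v_i| \ge 1/A$. Equivalently, the spherical caps of angular radius $\phi = \arccos(1/A) < \pi/2$ centered at the points $v_i/|v_i| \in S^{n-1}$ cover $S^{n-1}$. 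I would then invoke the cap estimate underlying the second part of Lemma~\ref{lemma:net} — a cap of angular radius $\phi$ has normalized measure less than $\tfrac12(\sin\phi)^{n-1} = \tfrac12(1-1/A^2)^{(n-1)/2}$ — and sum over the cover to get $1 < \tfrac{N}{2}(1-1/A^2)^{(n-1)/2}$, whence
\[ N > 2\,(1-1/A^2)^{-(n-1)/2} \ge 2\exp\!\Big(\tfrac{n-1}{2A^2}\Big) > \exp\!\Big(\tfrac{n-1}{2A^2}\Big), \]
using $-\ln(1-x) \ge x$ with $x = 1/A^2$. Taking the infimum over admissible $P$ and then logarithms yields $\dim_V(B_2^n,A) \ge \tfrac{n-1}{2A^2}$.

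I do not expect a genuine obstacle here. The only points needing a line of care are that the polarity identity survives with the same parameter $A$, and the routine translation between "caps cover the sphere" and the cardinality bound — alternatively one could observe that $\{v_i/|v_i|\}$ is a $2\sin(\phi/2)$-net of $S^{n-1}$ (with $2\sin(\phi/2)<\sqrt2$) and apply the converse half of Lemma~\ref{lemma:net} verbatim. The case $n=1$ is vacuous, the right-hand side being $0$.
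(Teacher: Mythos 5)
Your proposal is correct and follows essentially the same route as the paper: self-polarity of $B_2^n$ for the equality, then a covering of $S^{n-1}$ by caps of angular radius $\arccos(1/A)$ centered at the (normalized) vertices, finished with the cap-measure bound from Lemma~\ref{lemma:net}. You merely spell out the "simple geometric argument" ($1=\sum_i\lambda_i\scalar{v_i}{\theta}\leq\max_i\scalar{v_i}{\theta}$) that the paper leaves implicit, and use $-\ln(1-x)\geq x$ in place of the paper's equivalent inequality $\sin(\arccos x)\leq\exp(-x^2/2)$.
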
 
\begin{proof} First, since $(B_2^n)^\circ=B_2^n$, the verticial and the facial dimensions coincide; 
this justifies the equality in the assertion. 
Now assume  that $\dim_V(B_2^n,A)= \log N$, so that there exists 
a polytope $P$ with vertices $x_1,\dots,x_N$ such that $A^{-1} B_2^n \subset P \subset B_2^n$. 
There is no loss of generality in assuming that $x_i \in S^{n-1}$. A simple geometric argument shows then that the spherical caps centered at
$x_i$ and of angle $\theta = \arccos (A^{-1})$ cover $S^{n-1}$. It follows from Lemma \ref{lemma:net} that $N \geq 2/\sin^{n-1} \theta$, and it remains
to use to the elementary inequality $\sin (\arccos x) \leq \exp(-x^2/2)$ valid for $-1 \leq x \leq 1$.
\end{proof}

Recall that to each convex body $K \subset \R^n$ containing
the origin in the interior, we may associate its gauge defined for $x \in \R^n$ as 
\[ \|x\|_K \coloneqq \inf \{ t \geq 0 \st x \in tK \}. \]
This gauge is a norm if and only if $K$ is $0$-symmetric.

\begin{theorem} [Dvoretzky--Milman theorem] 
\label{theorem:dvoretzky-tangible}
There is an absolute constant $c>0$ such that the following holds. 
Let $K$ be a convex body in $\R^n$ such that 
$rB_2^n \subset K$ for some $r>0$, and let $\e \in (0,1)$. 
Denote by $M$ the expectation of $\|X\|_K$, where $X$ is a uniformly distributed random vector on $S^{n-1}$. 
Then there exist an integer  $k \geq c \e^2 M^2 r^2 n$ and a $k$-dimensional
subspace $E \subset \R^n$ such that
\[ (1-\e) M B_2^E \subset K \cap E \subset (1+\e) M B_2^E \]
where $B_2^E \coloneqq B_2^n \cap E$ denotes the unit ball in $E$.
\end{theorem}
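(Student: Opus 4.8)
The plan is to derive the Dvoretzky--Milman theorem in the ``tangible'' form stated (with the explicit value $M$ of the mean width playing the role of the radius) from Milman's proof of Dvoretzky's theorem, which gives exactly such a statement for $0$-symmetric convex bodies; the only real work is to handle the absence of symmetry. First I would reduce to the symmetric case by a symmetrization trick: given $K$ with $rB_2^n \subset K$, consider $\widetilde K = K \cap (-K)$, which is $0$-symmetric and still contains $rB_2^n$. The gauge $\|\cdot\|_{\widetilde K} = \max(\|\cdot\|_K, \|-\cdot\|_K)$ is a genuine norm, and one has $\|x\|_{\widetilde K} \geq \|x\|_K$ pointwise, so if $\widetilde M \coloneqq \E \|X\|_{\widetilde K}$ then $\widetilde M \geq M$. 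The classical symmetric Dvoretzky--Milman theorem applied to $\widetilde K$ produces, for the given $\e$, a subspace $E$ of dimension $k \geq c' \e^2 \widetilde M^2 r^2 n$ on which $(1-\e)\widetilde M B_2^E \subset \widetilde K \cap E \subset (1+\e)\widetilde M B_2^E$.

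The remaining issue is that this gives a good Euclidean approximation of $\widetilde K \cap E$ with radius $\widetilde M$, whereas we want one of $K \cap E$ with radius $M$, and a priori $\widetilde M$ could be much larger than $M$. The key observation to close this gap is that on the subspace $E$ the norms $\|\cdot\|_K$ and $\|\cdot\|_{\widetilde K}$ must actually nearly coincide, so in particular $\widetilde M$ and $M$ are comparable up to a factor $1+O(\e)$. The clean way to see this: the upper inclusion $\widetilde K \cap E \subset (1+\e)\widetilde M B_2^E$ together with $K \cap E \supset \widetilde K \cap E$ gives the lower bound $K \cap E \supset (1-\e)\widetilde M B_2^E$ directly (this is the easy half, and it is all we strictly need for the stated conclusion once we relate $\widetilde M$ to $M$). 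For the upper inclusion $K \cap E \subset (1+\e)\widetilde M B_2^E$ one argues that on $E$, $\|\cdot\|_{\widetilde K} \leq (1-\e)^{-1}(1+\e)^{-1}\cdot(\text{something})$... more cleanly: since $rB_2^n \subset K$, one has $\|x\|_K \leq r^{-1}|x|$ always, and on $E$ one also has $|x| \leq ((1-\e)\widetilde M)^{-1}\|x\|_{\widetilde K}$; combining with a concentration estimate for $\|X\|_K$ restricted to the sphere of $E$ (again via Milman's machinery, or by noting the gauge $\|\cdot\|_K\cdot$ is $r^{-1}$-Lipschitz) shows $M$ and $\widetilde M$ differ by at most a $(1\pm O(\e))$ factor. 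Then rescaling $\e$ by an absolute constant absorbs the discrepancy and yields the statement with the radius $M$ and dimension $k \geq c\e^2 M^2 r^2 n$.

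The main obstacle I anticipate is precisely this symmetrization gap: one must be careful that the dimension bound stays $\Omega(\e^2 M^2 r^2 n)$ and not $\Omega(\e^2 \widetilde M^2 r^2 n)$ with a possibly much smaller $M$ on the right — but since we showed $\widetilde M \geq M$, the symmetric theorem gives a dimension bound in terms of $\widetilde M^2 \geq M^2$, so the inequality goes the \emph{favorable} way and there is actually no loss; the subtlety is only in verifying the two-sided inclusion for $K \cap E$ (rather than $\widetilde K \cap E$) with radius proportional to $M$, which as sketched above follows once one knows $M \asymp \widetilde M$ on $E$. An alternative route, avoiding symmetrization entirely, is to invoke a version of the Dvoretzky--Milman theorem for the non-symmetric setting directly (such statements appear in the literature on the concentration of the gauge function on the sphere), in which case the proof reduces to a citation plus bookkeeping of the constants and the precise form of the conclusion. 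Either way, once the correct subspace $E$ and radius are in hand, the final inequalities $(1-\e)MB_2^E \subset K\cap E \subset (1+\e)MB_2^E$ are immediate and the proof concludes.
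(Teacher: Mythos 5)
Your main argument --- symmetrization via $\widetilde{K}=K\cap(-K)$ --- has a genuine gap, in fact two. First, only one of the two inclusions transfers from $\widetilde{K}\cap E$ to $K\cap E$: since $\widetilde{K}\subset K$, the inner inclusion $(1-\e)\widetilde{M}B_2^E\subset \widetilde{K}\cap E$ does pass to $K\cap E$, but the outer one does not, because $K$ may protrude far beyond $\widetilde{K}$ in directions lying in $E$, and the subspace $E$ was chosen with no regard for $K\setminus\widetilde{K}$. Your attempted fix uses $\|x\|_K\le r^{-1}|x|$, which bounds the gauge from \emph{above} and hence controls $K$ from \emph{inside} --- the wrong direction for the outer inclusion. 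Second, the claim that $\widetilde{M}=(1+O(\e))M$ is false in general: one always has $M\le\widetilde{M}\le 2M$ (since $\max(a,b)\le a+b$ and $-X\sim X$), but the ratio $\widetilde{M}/M$ can be bounded away from $1$ for genuinely non-symmetric bodies, so even if the outer inclusion could be rescued you would lose the $(1\pm\e)$ precision and obtain only a constant-factor sandwich. The fallback you mention --- concentration of $\|X\|_K$ on the sphere of $E$ --- only relates the restricted average to $M$ when $E$ is a \emph{random} (generic) subspace for $K$; the $E$ produced by the symmetric theorem applied to $\widetilde{K}$ carries no such guarantee, and if you run the concentration-plus-net argument for $\|\cdot\|_K$ on a random subspace directly, the symmetrization step was never needed.

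That last observation is exactly the paper's point of view: Theorem \ref{theorem:dvoretzky-tangible} is not reproved there, but the text notes that Milman's proof (L\'evy's lemma applied to the gauge, which is $r^{-1}$-Lipschitz, followed by a union bound over a net of a random subspace) uses only positive homogeneity and the triangle inequality, both of which hold for gauges of non-symmetric convex bodies containing the origin; the symmetry hypothesis of \cite{Milman71} is simply never invoked. So the ``alternative route'' you relegate to a closing remark is the correct and intended proof, while the symmetrization reduction you develop in detail does not close.
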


Theorem \ref{theorem:dvoretzky-tangible} is a fundamental result in the geometry 
of high-dimensional convex bodies. If
we do not insist on having the correct dependence on $\e$ (which was shown in \cite{Gordon88,Schechtman89}, 
but which is not needed here), its proof essentially amounts to using
concentration of measure in the form of L\'evy's lemma \cite{Levy51}, 
combined with a simple union bound argument. We also note 
that the hypothesis that $K$ is symmetric present in \cite{Milman71} is not needed in the argument. 
Another important point is that
the conclusion of Theorem \ref{theorem:dvoretzky-tangible} holds for most subspaces $E$, 
but this aspect is not relevant to the present paper.
An application of (the complex version of) Theorem \ref{theorem:dvoretzky-tangible} 
in Quantum Information Theory appears in \cite{ASW11},  where 
it is shown to imply and conceptually simplify Hastings's result \cite{Hastings09} 
about non-additivity of classical capacity of quantum channels.

For the reader's convenience, and because the statement is only implicit in \cite{FLM77}, 
we reproduce the argument 
allowing to derive Theorem \ref{theorem:FLM} from Theorem \ref{theorem:dvoretzky-tangible}. 
Let $K$ be a convex body
in $\R^n$ containing $0$ in the interior. 
Since the verticial and facial dimension are invariant under linear transformations, we may
assume that the ellipsoid witnessing the infimum in \eqref{eq:def-a} is a Euclidean ball, i.e., that 
$rB_2^n \subset K \subset RB_2^n$ with $R/r=a(K)$.  Let $M = \E \|X\|_K$ 
and $M^* = \E \|X\|_{K^\circ}$ where 
$X$ is a random vector uniformly distributed on the unit sphere. 
The pointwise inequality $\|\cdot\|^{1/2}_K \|\cdot\|^{1/2}_{K^\circ} \geq 1$ 
implies by the Cauchy--Schwarz inequality that $MM^* \geq 1$.

We now apply Theorem \ref{theorem:dvoretzky-tangible} to $K$ with $\e=1/2$ (say). 
This yields a subspace $E \subset \R^n$ of dimension $\Omega((rM)^2n)$ such that 
\[ \frac{M}{2} B_2^E \subset K \cap E \subset \frac{3M}{2} B_2^E .\]
It follows that $\dim_F(K \cap E,A) \geq \dim_F(B_2^E,3A) \geq \frac{1}{18A^2} (\dim E-1)$, 
where the second inequality comes from Lemma \ref{lemma:ball}.
Consequently 
\[\dim_F(K,A) \geq \dim_F(K \cap E,A) = \Omega((rM)^2n A^{-2}). \]
We apply the same argument to $K^\circ$
(note that $R^{-1} B_2^n \subset K^\circ$) and obtain that 
\[\dim_F(K^\circ,B) = \Omega((M^*/R)^2nB^{-2}). \] 
Since $\dim_V(K,B)=\dim_F(K^\circ,B)$, it follows that
\[ \dim_F(K,B) \dim_V(K,A) = \Omega \left(n^2 (MM^*)^2 (r/R)^2 A^{-2}B^{-2}\right) 
= \Omega \left( n^2 A^{-2}B^{-2}/ a(K)^2 \right), \]
as needed. 

\section{Facial and verticial dimension of \texorpdfstring{$\cD$}{D} and \texorpdfstring{$\cS$}{Sep}} 
\label{section:dim-DS}

We now give references or justifications for the values appearing in Table \ref{table}. 
The case of the Euclidean ball is essentially contained in Lemma \ref{lemma:net} 
(cf. Lemma \ref{lemma:ball}).
The estimates for the cube and for the simplex are  either trivial or follow by standard 
arguments; they are  not used in this paper. 

We set $\cD = \cD(\C^m)$ and $\cS = \cS(\C^d \otimes \C^d)$. We recall that the role of the origin
is played by the maximally mixed state, i.e., that the condition on polytopes appearing in the definition of
$\dim_V(\cD)$ or $\dim_F(\cD)$ is $P \subset \cD \subset 4 \bullet P$, and similarly for $\cS$.
 It is elementary to check that 
\begin{equation} \label{eq:HS-D} 
\frac{1}{\sqrt{m(m-1)}} \bullet B_{\HS} \subset \cD \subset \sqrt{\frac{m-1}{m}} \bullet B_{\HS} 
\end{equation}
so that $a(\cD) \leq m-1$. We have actually $a(\cD) = m-1$: by a symmetry argument, the optimal ellipsoid must
be a multiple of the Hilbert--Schmidt ball, and the values in \eqref{eq:HS-D} are optimal.  

Similarly, we have
\begin{equation} \label{eq:HS-S} 
\frac{1}{\sqrt{d^2(d^2-1)}} \bullet B_{\HS} \subset \cS \subset \sqrt{\frac{d^2-1}{d^2}} \bullet B_{\HS} . 
\end{equation}
While the second inclusion in \eqref{eq:HS-S} is an immediate consequence of \eqref{eq:HS-D}  and 
of $\cS$ being a subset of $\cD$, the first one is a non-trivial result due to 
Gurvits and Barnum \cite{GurvitsBarnum02}. It follows that $a(\cS) \leq d^2-1$ and, like for $\cD$,  
there is actually an equality. 

In order to justify all the estimates appearing in Table \ref{table}, we prove Theorems \ref{theorem:dimension-D} and \ref{theorem:dimension-S} below.
The lower bound on the facial dimension of $\cS$ is obtained in an indirect way via the Figiel--Lindenstrauss--Milman inequlity.
This is reminiscent of the arguments from 
\cite{ASY12,ASY14}, where calculating \emph{directly} certain invariants of the set  $\cS$  was not feasible 
because of the hardness of detecting entanglement, but it was possible to reasonably estimate 
the values of those invariants using duality considerations and deep results from asymptotic geometric analysis. 

\begin{theorem} \label{theorem:dimension-D}
Let $\cD = \cD(\C^m)$. We have 
\begin{equation} \label{eq:dimV-D} 
d_F(\cD)=d_V(\cD) = \Theta (m). 
\end{equation}
\end{theorem}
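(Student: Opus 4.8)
The plan is to prove matching upper and lower bounds on $\dim_V(\cD)$; since $\dim_F(\cD)=\dim_V(\cD)$ by the polarity relation \eqref{eq:D-polar}, this suffices. The easy direction is the upper bound $\dim_V(\cD)=O(m)$. For this I would take $\delta$ a small absolute fraction of $1/m$ — e.g. $\delta=1/(4m)$, so that $1-2m\delta=\tfrac12$ — and apply Lemma \ref{lemma:net} to get a $\delta$-net $\cM$ in $S_{\C^m}=S^{2m-1}$ with $\card \cM \leq (1+2/\delta)^{2m-1}=\exp(O(m\log m))$. Then Lemma \ref{lemma:approximation-D} gives $\tfrac12\bullet\cD \subset \conv\{\ketbra{\psi_i}{\psi_i}:\psi_i\in\cM\}\subset\cD$. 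This polytope has at most $\card \cM$ vertices, but the resolution factor is $2$, not $4$; one either observes the argument tolerates any fixed factor or iterates the net once. This gives $\dim_V(\cD)=O(m\log m)$, which together with the lower bound already yields $\Theta(m)$ up to the logarithmic factor flagged in the paper — and the removal of that factor is deferred to Section \ref{section:dimension-D-random}, so for the present theorem it is legitimate to cite that later refinement for the sharp $O(m)$ bound.

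For the lower bound $\dim_V(\cD)=\Omega(m)$, I would invoke the Figiel--Lindenstrauss--Milman inequality, Theorem \ref{theorem:FLM}. We have $n=\dim H = m^2-1$ and, from \eqref{eq:HS-D}, $a(\cD)\leq m-1$ (in fact $=m-1$, but the inequality is all we need). Plugging into \eqref{eq:FLM-4},
\[
\dim_F(\cD)\,\dim_V(\cD)\,(m-1)^2 = \Omega\big((m^2-1)^2\big),
\]
and since the two dimensions are equal this forces $\dim_V(\cD)=\Omega\big((m^2)^2/m^2\big)=\Omega(m^2)$ — which is too strong and hence wrong, signalling that one must be careful: the correct reading is $\dim_V(\cD)^2(m-1)^2=\Omega(m^4)$, giving $\dim_V(\cD)=\Omega(m)$. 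That is the desired bound. Alternatively, and perhaps cleaner, one can argue directly: \eqref{eq:HS-D} exhibits a Euclidean ball of radius comparable to $1/m$ inside $\cD$ and of radius $\Theta(1)$ containing $\cD$; passing to a suitable section or simply applying Lemma \ref{lemma:ball} after a linear rescaling of the inscribed ball shows that approximating a Euclidean ball of dimension $m^2-1$ within a constant factor needs $\exp(\Omega(m^2))$ vertices — but this counts vertices for the ball, and a polytope approximating $\cD$ within factor $4$ need not approximate the inscribed ball within a constant, so the FLM route (which trades off facets against vertices and exploits $\dim_F=\dim_V$) is the honest one.

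The main obstacle is the lower bound, and specifically the fact that a naive volumetric or inscribed-ball argument does not directly deliver $\Omega(m)$: because $\cD$ is very flat (asphericity $m-1$ is as large as it can be for its dimension), a polytope approximating $\cD$ within a bounded factor can have far fewer vertices than one approximating a round ball of the same ambient dimension. The FLM inequality is precisely the tool that handles this — it says the product $\dim_F\cdot\dim_V$ cannot be small even for flat bodies, and the self-duality \eqref{eq:D-polar} converts this product bound into a bound on $\dim_V$ alone. The remaining technical point — removing the stray $\log m$ to get the clean $\Theta(m)$ rather than $O(m\log m)$ on the upper side — is genuinely more delicate and is handled separately via a random construction in Section \ref{section:dimension-D-random}; for the statement as given it is enough to note that the FLM lower bound is $\Omega(m)$ and the refined upper bound matches it.
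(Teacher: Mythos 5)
Your proposal is correct and follows essentially the same route as the paper: the lower bound $\Omega(m)$ via the Figiel--Lindenstrauss--Milman inequality combined with the self-duality \eqref{eq:D-polar} and $a(\cD)=m-1$, the $O(m\log m)$ upper bound from Lemma \ref{lemma:approximation-D} plus a net of cardinality $\exp(O(m\log m))$, and deferral of the sharp $O(m)$ bound to the random-net construction of Section \ref{section:dimension-D-random}. The momentary miscomputation in the FLM step ($\Omega(m^2)$) is correctly caught and repaired to $\dim_V(\cD)^2\,(m-1)^2=\Omega(m^4)$, and your worry about the resolution factor $2$ versus $4$ is unnecessary, since a factor-$2$ approximation is stronger than the factor-$4$ one required by the default definition.
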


\begin{theorem} \label{theorem:dimension-S}
Let $\cS = \cS(\C^d \otimes \C^d)$. We have 
\begin{equation} \label{eq:dimV-S} 
\dim_V (\cS) = \Theta(d \log d) \end{equation}
and
\begin{equation} \label{eq:dimF-S} 
\dim_F(\cS) = \Omega(d^3/\log d). \end{equation}
More generally, for any $A>1$, 
\begin{equation} \label{eq:dimF-S-A} \dim_F(\cS,A) = \Omega(d^3A^{-2}/\log d). \end{equation}
\end{theorem}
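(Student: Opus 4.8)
The plan is to prove Theorem \ref{theorem:dimension-S} by splitting into the verticial estimate \eqref{eq:dimV-S} and the facial estimate \eqref{eq:dimF-S-A}, treating the latter as the main point. For the verticial bound, the upper estimate $\dim_V(\cS) = O(d\log d)$ should come from a net argument analogous to Lemma \ref{lemma:approximation-D}: take a $\delta$-net $\cM$ in $S_{\C^d} \times S_{\C^d}$ (equivalently, nets in each factor, of cardinality $\exp(O(d))$ each, so $\log\card\cM = O(d\log(1/\delta))$ after choosing $\delta$ polynomial in $d$), form the polytope $\conv\{\ketbra{\psi\otimes\vphi}{\psi\otimes\vphi}\}$ over net points, and check via \eqref{eq:tKsubsetL} that a suitable $\bullet$-dilate of $\cS$ is contained in it — the estimate $\|A\|_{\op}\le d^2\lambda_1(A)$ on trace-zero $A$ combined with the product structure of the net gives the factor $1 - O(d\delta)$, so $\delta \asymp 1/d$ suffices. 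The matching lower bound $\dim_V(\cS) = \Omega(d\log d)$ is where one uses the Figiel--Lindenstrauss--Milman inequality \eqref{eq:FLM-4}: it will follow from \eqref{eq:dimF-S-A} together with $a(\cS)\le d^2-1$ once the facial bound is in hand, since $\dim_V(\cS) = \Omega(d^4 / (\dim_F(\cS)\, a(\cS)^2)) = \Omega(d^4/(d^3 \cdot d^4/\log d \cdot \ldots))$ — I should be careful with the arithmetic here, but the shape is that the two dimensions are forced to multiply up to $\Omega(d^4)$ modulo the asphericity-squared factor $d^4$, leaving room for $\dim_V$ to be as small as $\Theta(d\log d)$ only because $\dim_F$ is forced to be large.

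The heart of the proof is the facial lower bound \eqref{eq:dimF-S-A}, and here the strategy is indirect, exactly as the paper's introduction advertises: one does \emph{not} try to exhibit many faces of $\cS$ directly (entanglement detection is hard), but instead invokes Theorem \ref{theorem:FLM} in the form
\[
\dim_F(\cS,A)\, \dim_V(\cS,B)\, a(\cS)^2 = \Omega(d^8 / (A^2 B^2)),
\]
plug in $a(\cS) \le d^2 - 1 = O(d^2)$ and the \emph{upper} bound $\dim_V(\cS,B) = O(d\log d)$ for a fixed $B$ (say $B=4$, or whatever constant makes the net argument go through), and solve for $\dim_F$. This yields $\dim_F(\cS,A) = \Omega(d^8 / (A^2 \cdot d^4 \cdot d\log d)) = \Omega(d^3 A^{-2}/\log d)$, which is precisely \eqref{eq:dimF-S-A}, and specializes to \eqref{eq:dimF-S} at $A=4$. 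So the logical dependencies run: net upper bound on $\dim_V(\cS)$ $\;\Rightarrow\;$ (via FLM + asphericity) lower bound on $\dim_F(\cS)$ $\;\Rightarrow\;$ (via FLM again) lower bound on $\dim_V(\cS)$, the second use of FLM being needed only to close the $\Theta$ in \eqref{eq:dimV-S}.

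The main obstacle is the asphericity input $a(\cS) \le d^2 - 1$, i.e., the first inclusion in \eqref{eq:HS-S}: that $\cS$ contains a Hilbert--Schmidt ball of radius $\asymp 1/d^3$ around the maximally mixed state is the nontrivial Gurvits--Barnum theorem \cite{GurvitsBarnum02}, and without it the FLM bound would be vacuous (a body with bad asphericity can have both dimensions small). Everything else is comparatively routine: the $\dim_V(\cS) = O(d\log d)$ net bound requires only care with the $\bullet$-operation and the trace-zero normalization, and the reverse inclusions ($\cS \subset$ polytope $\subset 4\bullet\cS$ type statements, and the outer inclusion $\cS \subset \sqrt{(d^2-1)/d^2}\bullet B_{\HS}$) are inherited from $\cD$. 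One subtlety worth flagging is that the verticial upper bound must be established for \emph{some} fixed resolution $B$ before it can be fed into \eqref{eq:FLM}; since FLM tracks the $B$-dependence explicitly and we only need the existence of one good $B$, this costs nothing, but it is the reason the argument does not circularly require the sharp $\Theta(d\log d)$ from the outset — the genuinely sharp statement \eqref{eq:dimV-S} is only recovered at the end, and (as the introduction notes) removing a residual logarithmic factor from the verticial dimension is deferred to Section \ref{section:dimension-D-random} via a separate random construction.
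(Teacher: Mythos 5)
Your treatment of the facial bound \eqref{eq:dimF-S-A} is exactly the paper's: feed the \emph{upper} bound $\dim_V(\cS)=O(d\log d)$ (net argument) and the Gurvits--Barnum asphericity $a(\cS)=d^2-1$ into \eqref{eq:FLM} with $n=d^4-1$ and solve for $\dim_F$; the arithmetic $\Omega(d^8/(A^2\cdot d^4\cdot d\log d))=\Omega(d^3A^{-2}/\log d)$ is right, and you correctly identify Gurvits--Barnum as the nontrivial input. (One quantitative quibble: in the net argument the error term is controlled by $W=\sup_{\text{product }\psi\otimes\vphi}\bra{\psi\otimes\vphi}A\ket{\psi\otimes\vphi}\geq d^{-2}\|A\|_{\op}$ -- again Gurvits--Barnum, not the trace-zero bound $\|A\|_{\op}\le d^2\lambda_1(A)$ -- so one needs $\delta\asymp d^{-2}$ rather than $d^{-1}$; this does not change the $O(d\log d)$ conclusion. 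Also, the log-removal in Section \ref{section:dimension-D-random} concerns $\cD$, not $\cS$.)

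The genuine gap is the lower bound $\dim_V(\cS)=\Omega(d\log d)$, which is half of the asserted $\Theta$ in \eqref{eq:dimV-S} and which you propose to extract from FLM ``once the facial bound is in hand.'' This cannot work: \eqref{eq:FLM} is a lower bound on the \emph{product} $\dim_F\cdot\dim_V$, so to lower-bound $\dim_V$ you would need an \emph{upper} bound on $\dim_F(\cS)$. The only upper bound available is the generic $\dim_F(\cS)=O(d^4)$, which after dividing $\Omega(d^8)$ by $a(\cS)^2=\Theta(d^4)$ and by $O(d^4)$ yields only $\dim_V(\cS)=\Omega(1)$; a matching upper bound $O(d^3/\log d)$ on $\dim_F(\cS)$ is precisely the open problem flagged in the paper's Conclusions, so your circle does not close. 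The paper instead proves $\dim_V(\cS)=\Omega(d\log d)$ by a direct argument: given a polytope $P$ with $N$ vertices and $\frac14\bullet\cS\subset P\subset\cS$, Carath\'eodory replaces it by a polytope $Q$ with $N'\le Nd^4$ pure product vertices $\ketbra{\psi_i\otimes\vphi_i}{\psi_i\otimes\vphi_i}$; testing $Q\supset\frac14\bullet\cS$ against the functionals $\rho\mapsto\tr\bigl[\rho\bigl(\ketbra{\chi}{\chi}\otimes(\alpha\Id-\ketbra{\vphi}{\vphi})\bigr)\bigr]$ forces the family $(\vphi_i)$ to be an $O(1/\sqrt d)$-net of the projective space of $\C^d$, whence $N'\geq (c\sqrt d)^{2d-1}/d$ by the lower bound in Lemma \ref{lemma:net}. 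This separate combinatorial ingredient is missing from your plan and is needed to establish the theorem as stated.
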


As we already noted, the self-duality of $\cD$ (see \eqref{eq:D-polar}) implies that
$\dim_V(\cD) = \dim_F(\cD)$. Since $a(\cD)=m-1$, Theorem \ref{theorem:FLM} implies that 
\[ d_F(\cD)=d_V(\cD) = \Omega (m) .\]
It is also easy to supply a direct argument going along the same lines as (but simpler than) 
the proof of the lower bound in \eqref{eq:dimV-S} presented later in this section. 

Surprisingly, the upper bound $d_V(\cD) = O(m)$ is not that easy to establish and 
so we postpone the proof of Theorem \ref{theorem:dimension-D} to 
Section \ref{section:dimension-D-random}.  However, the slightly weaker bound 
$O(m \log m)$, which is sufficient for the proof of Theorems 
Theorem \ref{theorem:main} and \ref{theorem:main}',
 follows immediately from Lemma \ref{lemma:approximation-D} (applied with $\delta = \frac{3}{8m}$)
and Lemma \ref{lemma:net}.

We now prove Theorem \ref{theorem:dimension-S}. We first note that since $a(\cS)=d^2-1$, 
once we know that \eqref{eq:dimV-S} holds, \eqref{eq:dimF-S} and \eqref{eq:dimF-S-A} 
follow by an application of Theorem \ref{theorem:FLM}.
It is likely that the lower bound \eqref{eq:dimF-S} on $\dim_F(\cS)$ is not sharp; 
any improvement would reflect on 
the estimate for $N$ in Theorem \ref{theorem:main}.

\begin{proof}[Proof of the upper bound in \eqref{eq:dimV-S}]
Let $\cN$ be an $\e$-net in $(S_{\C^d},|\cdot|)$ with $\e$ to be determined. We want to show that
\begin{equation} \label{eq:inclusion-S} \frac{1}{4} \bullet \cS  \subset \conv 
\left\{ \ketbra{\psi_i \otimes \psi_j}{\psi_i \otimes \psi_j} 
\st \psi_i,\psi_j \in \cN \right\} \subset \cS.\end{equation} 
Equivalently, by \eqref{eq:tKsubsetL}, we must show that for any trace zero Hermitian matrix $A$ we have
\[ W\coloneqq\sup_{\psi,\vphi \in S_{\C^d}} \bra{\psi \otimes \vphi} A \ket{\psi \otimes \vphi} 
\leq 4 \sup_{\psi_i,\psi_j \in \cN} \bra{\psi_i \otimes \psi_j} A 
\ket{\psi_i \otimes \psi_j}. \]
First, note that using the left inclusion from \eqref{eq:HS-S} yields 
\[ W \geq \frac{1}{d^2} \|A\|_{\HS} \geq \frac{1}{d^2} \|A\|_{\op}. \] 
Given $\vphi,\psi \in S_{\C^d}$, there are $\psi_i,\psi_j \in \cN$ with 
$|\vphi-\psi_i| \leq \e$
and $|\psi-\psi_j| \leq \e$. Using the triangle inequality as in \eqref{eq:netD1}--\eqref{eq:netD2}, we have
\[ \bra{\vphi \otimes \psi} A \ket{\vphi \otimes \psi} 
\leq 4\e \|A\|_{\op} + \bra{\psi_i \otimes \psi_j} A \ket{\psi_i \otimes \psi_j} 
\leq 4\e d^2 W + \bra{\psi_i \otimes \psi_j} A \ket{\psi_i \otimes \psi_j}. \]
Taking supremum over $\vphi,\psi$, we obtain
\[ W \leq 4\e d^2 W + \sup_{\psi_i,\psi_j \in \cN} \bra{\psi_i \otimes \psi_j} A 
\ket{\psi_i \otimes \psi_j}. \]
We now set $\e={3}/{16d^2}$; this guarantees that \eqref{eq:inclusion-S} holds. 
By Lemma \ref{lemma:net}, we may choose $\cN$ such that $\card \cN \leq (16d^2)^{2d}$. 
Since we produced a polytope $P$ with $(\card \cN)^2$ vertices 
such that $\frac{1}{4} \bullet P \subset \cS \subset P$, it follows that $\dim_V(\cS) = O(d \log d)$. 
\end{proof} 

The estimates used in the argument above may appear quite crude and so it comes as a surprise that 
the obtained bound is actually tight.

\begin{proof}[Proof of the lower bound in \eqref{eq:dimV-S}]
Let $P$ be a polytope with $N$ vertices such that $\frac{1}{4} \bullet \cS \subset P \subset \cS$. 
By Carath\'eodory's theorem, we may write
each vertex of $P$ as a combination of $d^4$ extreme points of $\cS$ (which are pure product states, i.e., of
the form $\ketbra{\psi \otimes \vphi}{\psi \otimes \vphi}$ for unit vectors  $\psi, \vphi \in \C^d$). 
We obtain therefore a polytope
$Q$ which is the convex hull of $N' \leq Nd^4$ pure product states, 
and such that $\frac{1}{4} \bullet \cS \subset P \subset Q \subset \cS$.
Let $(\ketbra{\psi_i \otimes \vphi_i}{\psi_i \otimes \vphi_i})_{1 \leq i \leq N'}$ be the vertices of $Q$.
Fix $\chi \in S_{\C^d}$ arbitrarily. For any $\vphi \in S_{\C^d}$, 
let $\alpha = \max \{ |\braket{\vphi}{\vphi_i}|^2 \st 1 \leq i \leq N' \}$. Consider the 
linear form
\[ g (\rho) = \tr \left[ \rho \left( \ketbra{\chi}{\chi} \otimes (\alpha \Id_{\C^d} - \ketbra{\vphi}{\vphi})\right) \right] . \]
For any $1 \leq i \leq N'$ we have
\[ g ( \ketbra{\psi_i \otimes \vphi_i}{\psi_i \otimes \vphi_i}) = 
|\braket{\chi}{\psi_i}|^2(\alpha-|\braket{\vphi}{\vphi_i}|^2) \geq 0 \]
and therefore $g$ is nonnegative on $Q$. Since $Q \supset \frac{1}{4} \bullet \cS$, we have
\begin{eqnarray*} 0 \leq g\left( \frac{1}{4} \bullet \ketbra{\chi \otimes \vphi}{\chi \otimes \vphi}  \right) & = 
&  \frac{1}{4}\, g(\ketbra{\chi \otimes \vphi}{\chi \otimes \vphi}) + \frac{3}{4} \,g(\rho_*) \\
&=&\frac{1}{4}( \alpha -1) + \frac{3}{4} \times \frac{1}{d} \left(\alpha-\frac{1}{d} \right) \\
&=& \frac{1}{4} \alpha \left( 1 + \frac{3}{d}\right) - \frac{1}{4} \left( 1 + \frac{3}{d^2} \right).
\end{eqnarray*}
It follows that
\[\alpha \geq \frac{1+\frac{3}{d^2}}{1+\frac{3}{d}} \geq 1-\frac{3}{d}. \]
In other words, we showed that for every $\vphi \in S_{\C^d}$ there is an index $i \in \{1,\dots,N'\}$ such that
$| \braket{\vphi}{\vphi_i} |^2 \geq {1-3/d}$.
This means that $(\vphi_i)_{1 \leq i \leq N'}$ is an $O(1/\sqrt{d})$-net in the projective space
over $\C^d$, when equipped with the quotient metric from $(S_{\C^d},|\cdot|)$. It follows that the set
\[ \cN = \left\{ e^{2 \imath \pi j/d} \vphi_i \st 1 \leq i \leq N', 1 \leq j \leq d \right\} \]
is an $O(1/\sqrt{d})$-net in $S_{\C^d}$. Thus, by Lemma \ref{lemma:net},  
$\card \cN \geq (c\sqrt{d})^{2d-1}$ for some 
absolute constant $c>0$. At the same time, $\card \cN \leq dN' \leq d^5 N$, 
 and combining the two bounds we infer that $\log N = \Omega ( d \log d)$, as asserted. 
\end{proof}

\section{The verticial dimension of \texorpdfstring{$\cD$}{D}: final touches} \label{section:dimension-D-random}

In this section we will complete the proof of Theorem \ref{theorem:dimension-D}. 
A proof of the lower bound (actually one proof and a sketch of another proof) was given in 
Section \ref{section:dim-DS}, after the statement of Theorem \ref{theorem:dimension-S}. 
Concerning the upper bound, it 
may seem reasonable to expect that choosing $\cN$ as a $\delta$-net in $S_{\C^d}$ (for some
sufficiently small $\delta$ independent of $d$) and taking the convex hull of the corresponding states 
would yield a polytope 
$P$ such that $\frac{1}{4} \bullet \cD \subset P \subset \cD$. 
This idea works for the unit ball for the trace class norm -- the
``symmetrized'' version of $\cD$ -- see Lemma 3 in \cite{AubrunSzarek06}.  
What makes the problem intriguing is that this approach fails for $\cD$. 
Indeed, given $\delta$, for $d$ large enough, a $\delta$-net $\cN$ may have
the property that for some fixed unit vector $\psi$, we have  $|\braket{\vphi_i}{\psi}|  > 1/\sqrt{d}$
for every $\vphi_i \in \cN$. It follows that, for every $\rho \in \conv \{ \ketbra{\vphi_i}{\vphi_i} \}$, 
we have $\bra{\psi} \rho \ket{\psi}> 1/d$. 
However, this inequality fails for $\rho = \rho_*$, which shows that even the maximally mixed
state does not belong to the convex hull of the net! 
Elements of the net may somehow conspire towards the direction $\psi$. 

Yet, this approach can be salvaged if we use a balanced $\delta$-net to avoid such conspiracies.
Lemma \ref{lemma:approximation-D} is not enough to directly imply Theorem \ref{theorem:dimension-D}, 
but it can be bootstrapped to yield the needed estimate. 
The idea is to use -- instead of an arbitrary net -- a family 
of random points independently and uniformly distributed on the unit sphere, 
and to show that these points satisfy
the conclusion of Theorem \ref{theorem:dimension-D} with high probability. 
(The observation that randomly chosen subsets often form very efficient nets goes back at least to 
Rogers \cite{Rogers57,Rogers63}.)
We actually prove the following, which
gives Theorem \ref{theorem:dimension-D} by specializing to $\e=3/4$.

\begin{proposition} \label{proposition:approximation-D}
For every $\e \in (0,1)$, there is a constant $C(\e)$ such that the following holds: for every dimension $d \geq 2$, there exists
a family $\cN = (\vphi_i)_{1 \leq i \leq N}$ of unit vectors in $\C^d$, with $N \leq \exp(C(\e) d)$, such that
\begin{equation} \label{eq:approximation-D}
 (1-\e) \bullet \cD(\C^d) \subset \conv \{ \ketbra{\vphi_i}{\vphi_i} \st \vphi_i \in \cN \}.
 \end{equation}
\end{proposition}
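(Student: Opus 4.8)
I would deduce \eqref{eq:approximation-D} from its supporting-functional form. By \eqref{eq:tKsubsetL} (with $t=1-\e$ and with $\cD(\C^d)$, $\conv\{\ketbra{\vphi_i}{\vphi_i}\}$ in the roles of $K$, $L$), the inclusion \eqref{eq:approximation-D} is equivalent to:
\begin{equation}\label{eq:star-plan}
\text{for every trace zero Hermitian }A\text{ with }\lambda_1(A)=1\text{ there is an }i\text{ with }\bra{\vphi_i}A\ket{\vphi_i}\ge 1-\e
\end{equation}
($\lambda_1(A)=1$ is a harmless normalization, and then $\|A\|_{\op}\le d-1$ since $\tr A=0$). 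I will take $\cN=(\vphi_i)_{1\le i\le N}$ to be $N=\lceil e^{C(\e)d}\rceil$ independent uniform points on $S_{\C^d}$ and prove that \eqref{eq:star-plan} holds with probability tending to $1$ as $d\to\infty$; the finitely many remaining dimensions are absorbed into $C(\e)$ by quoting Lemma \ref{lemma:approximation-D} with $\delta=\e/2d$. Arguing through \eqref{eq:star-plan} is what yields an \emph{exact} inclusion; a direct attempt to approximate each extreme point of $\cD$ by a nearby point of $\conv\cN$ would only give an approximate inclusion, and would then need $\conv\cN$ to be ``fat'' near the pointy boundary of $\cD$, reintroducing a spurious $\log d$.

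Next I would reduce \eqref{eq:star-plan} to a statement about a single ``direction''. Fix $A$ as above (call it \emph{admissible}) and a unit eigenvector $\psi$ for the top eigenvalue, so $A\psi=\psi$ and $A$ maps $\psi^\perp$ into itself; writing a unit vector $\vphi$ as $\vphi=\alpha\psi+\sqrt{1-u}\,\chi$ with $u=|\scalar{\psi}{\vphi}|^2$ and $\chi$ a unit vector of $\psi^\perp$, the cross terms vanish and
\[ \bra{\vphi}A\ket{\vphi}=u+(1-u)\bra{\chi}A\ket{\chi}. \]
It follows that there are absolute constants $c_1\ge 1+c_2$ such that: if some $i$ has $u_i:=|\scalar{\psi}{\vphi_i}|^2\ge 1-\e/c_1$ and its transverse direction $\chi_i\in S_{\psi^\perp}$ satisfies $\bra{\chi_i}A\ket{\chi_i}\ge -c_2$, then $\bra{\vphi_i}A\ket{\vphi_i}\ge 1-\e$. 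Since $B:=A|_{\psi^\perp}$ is Hermitian on $\psi^\perp\cong\C^{d-1}$ with $\tr B=-1$ and $\|B\|_{\op}\le d-1$, statement \eqref{eq:star-plan} would follow from: almost surely, for every unit vector $\psi$ and every such $B$, some $i$ has $u_i\ge 1-\e/c_1$ and $\bra{\chi_i}B\ket{\chi_i}\ge -c_2$.

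The heart is then a doubly-exponential estimate for a fixed pair $(\psi,B)$. Decomposing each uniform $\vphi_i$, the $u_i$ and the $\chi_i$ are independent, with $\P(u_i\ge 1-\tau)=\tau^{d-1}$ and $\chi_i$ uniform on $S_{\psi^\perp}$; so, conditionally on the random set $I=\{i:u_i\ge 1-\e/c_1\}$, the $(\chi_i)_{i\in I}$ are i.i.d.\ uniform on $S_{\psi^\perp}$ and $|I|$ is binomial with mean $(\e/c_1)^{d-1}N=e^{\Omega_\e(d)}$ once $C(\e)$ is large. Meanwhile, for \emph{any} Hermitian $B$ on $\C^{d-1}$ with $\tr B=-1$, $\|B\|_{\op}\le d-1$ and $\chi$ uniform on $S_{\psi^\perp}$, a simplex computation gives $\E\bra{\chi}B\ket{\chi}=-\tfrac1{d-1}$ and $\Var\bra{\chi}B\ket{\chi}\le\sum_j\lambda_j(B)^2/((d-1)d)\le(2d-1)/d<2$, using $\sum_j\lambda_j(B)^2\le\|B\|_{\op}\|B\|_{\textnormal{Tr}}\le(d-1)(2d-1)$; so Chebyshev's inequality gives $\P_\chi(\bra{\chi}B\ket{\chi}<-c_2)\le\tfrac12$ for a suitable absolute $c_2$. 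Hence, for fixed $(\psi,B)$,
\[ \P\bigl(\text{no }i\in I\text{ has }\bra{\chi_i}B\ket{\chi_i}\ge -c_2\bigr)\le\E\bigl[(\tfrac12)^{|I|}\bigr]=\bigl(1-\tfrac12(\e/c_1)^{d-1}\bigr)^N\le\exp\!\bigl(-\tfrac12(\e/c_1)^{d-1}e^{C(\e)d}\bigr), \]
which for $C(\e)=2\log(c_1/\e)$ is at most $\exp(-e^{c(\e)d})$ with $c(\e)>0$.

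Finally I would run the union bound. Discretize $\psi$ over a $\delta_0$-net of $S_{\C^d}$ (Lemma \ref{lemma:net}) and, for each net point $\psi_0$, discretize the compressions $P_{\psi_0^\perp}AP_{\psi_0^\perp}$ over a $\delta_1$-net in operator norm (the variance bound above persists for these, so Chebyshev still applies); these auxiliary nets have sizes $e^{O(d\log(1/\delta_0))}$ and $e^{O(d^2\log(d/\delta_1))}$. Given an admissible $A$ with top eigenvector $\psi$, choose $\psi_0$ within $\delta_0$ of $\psi$ and $B_0$ within $\delta_1$ of $P_{\psi_0^\perp}AP_{\psi_0^\perp}$; some bookkeeping (stricter thresholds in the ``good'' events to absorb perturbations, plus the harmless case where $1-u_i$ is so small the transverse direction is immaterial) shows the good event at $(\psi_0,B_0)$ forces $\bra{\vphi_i}A\ket{\vphi_i}\ge 1-\e$ for the corresponding $i$. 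The main obstacle is precisely controlling the passage $\psi\mapsto\psi_0$: replacing $\psi$ by $\psi_0$ moves the \emph{normalized} transverse direction $\chi_i$, whose unnormalized version has norm $\sqrt{1-u_i}$ only polynomially (in $1/d$) bounded below, while $\|A\|_{\op}$ can reach $d-1$, so keeping $\bra{\chi_i}A\ket{\chi_i}$ stable within an absolute constant forces $\delta_0$ polynomially small in $d$ (of order $\e/d^2$). This is harmless: such a $\delta_0$ inflates the auxiliary $\psi$-net only to $e^{O(d\log d)}$, and — crucially — does not touch $N$, since $\delta_0$ enters only the union bound. With $\delta_1$ a small absolute constant, the total failure probability is $\le e^{O(d^2\log d)}\exp(-e^{c(\e)d})\to 0$, establishing \eqref{eq:star-plan} for all large $d$ and hence the proposition.
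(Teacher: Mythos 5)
Your proof is correct, but it takes a genuinely different route from the paper's. Both arguments start from the same two ingredients -- random points on $S_{\C^d}$ and the observation that, conditionally on landing in a small cap around a fixed $\psi$, the points are well distributed -- but they diverge in what is required of the points in the cap. The paper fixes only a net of $\psi$'s (of size $\exp(O(d\log(d/\e)))$) and demands, via the matrix Hoeffding inequality, that the \emph{empirical average} of all the points falling in $C(\psi,\theta)$ approximate the cap barycenter $(1-\alpha)\bullet\ketbra{\psi}{\psi}$ in operator norm; this single operator-norm event controls $\bra{\psi}A\ket{\psi}$ for \emph{all} trace-zero $A$ simultaneously, so the perturbation analysis reduces to the already-proved Lemma \ref{lemma:approximation-D}. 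You instead demand only that \emph{one} point in the cap have a transverse direction $\chi_i$ that is not too unfavorable for the compression $B$, which you get from a first/second-moment (Chebyshev) computation; the price is that the good event depends on $B$ as well as $\psi$, so your union bound must run over a net of pairs $(\psi_0,B_0)$ of size $\exp(O(d^2\log d))$, and you must carry out by hand the perturbation of the normalized transverse direction (correctly identified as the delicate step, and correctly resolved by taking $\delta_0$ polynomially small and splitting off the case where $1-u_i$ is negligible). Both union bounds succeed because the per-event failure probability is doubly exponentially small. What each approach buys: yours is more elementary, replacing matrix concentration by scalar binomial concentration plus Chebyshev on the uniform measure on the sphere of $\psi^\perp$; the paper's is structurally cleaner, since the operator-norm statement decouples the randomness from the test matrix $A$ and keeps the net one of vectors rather than of operators. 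The remaining bookkeeping in your sketch (transferring the good event at $(\psi_0,B_0)$ to the actual $(\psi,A)$, and verifying that the variance bound survives discretization of $B$) is routine and works out with the constants you indicate.
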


\begin{proof}[Proof of Proposition \ref{proposition:approximation-D}]
The conclusion \eqref{eq:approximation-D} can be (by \eqref{eq:tKsubsetL}) 
equivalently reformulated as follows: 
\emph{for any self-adjoint trace zero matrix $A$ we have}  
\begin{equation} \label{eq:bound-lambdamax} 
\lambda_{1}(A) = \sup_{\psi \in S_{\C^d}} \bra{\psi} A \ket{\psi} 
\leq \frac{1}{1-\e} \; 
\sup_{\vphi_i \in \cN} \bra{\vphi_i} A \ket{\vphi_i} .
\end{equation}
Let $\mathcal{M}$ be an $\frac{\e}{4d}$-net in $S_{\C^d}$ given by Lemma \ref{lemma:net}.   
By Lemma \ref{lemma:approximation-D}, we have
\begin{equation} \label{equation:approx1}
\sup_{\psi \in S_{\C^d}} \bra{\psi} A \ket{\psi} \leq \frac{1}{1-\e/2} \; \sup_{\psi \in \mathcal{M}} \bra{\psi} A \ket{\psi} .
\end{equation}
Set $\theta = \sqrt{\e/8}$. For $\psi \in S_{\C^d}$, denote by $C(\psi,\theta) \subset S_{\C^d}$ 
the cap with center $\psi$ and radius $\theta$ with respect to the geodesic distance. 
By symmetry, there is a number $\alpha$ 
(depending on $d$ and $\e$) such that
\begin{equation} 
 \label{eq:average-cap} 
\frac{1}{\sigma(C(\psi,\theta))} \int_{C(\psi,\theta)} \ketbra{\vphi}{\vphi} \, \mathrm{d}  \sigma(\vphi) 
= (1-\alpha) \bullet \ketbra{\psi}{\psi}
\end{equation}
where $\sigma$ denotes the uniform probability measure on $S_{\C^d}$.
Taking (Hilbert-Schmidt) inner product with $\ketbra{\psi}{\psi}$, we obtain
\[ 1-\alpha +\frac{\alpha}{d} = \frac{1}{\sigma(C(\psi,\theta))} \int_{C(\psi,\theta)}| \braket{\psi}{\vphi} |^2 
\, \mathrm{d}  \sigma(\vphi)  \geq \cos^2 \theta \geq 1 - \theta^2 \]
so that 
\begin{equation}  \label{eq:average-cap-weight} 
\alpha \leq \theta^2\frac{d}{d-1} \leq \e/4. 
\end{equation}
Denote $L \coloneqq \sigma(C(\psi,\theta))^{-1}$ and let $\mathcal{N} = \{\vphi_i \st 1 \leq i \leq N \}$ 
be a family of $N=\lceil 2L^3\rceil$ independent random vectors
uniformly distributed on $S_{\C^d}$. 
(To not to obscure the argument, we will pretend in what follows 
that $2L^3$ is an integer and so $N=2L^3$.) 
We will rely on the following lemma (to be proved later). 

\begin{lemma} \label{lemma:approximation-D-by-random}
Let $B_{\op} = \{ \Delta \in \cM_d \st \|\Delta\|_{\op} \leq 1 \}$ be the unit ball for the operator norm. 
For $\psi \in S_{\C^d}$ and $t \geq 0$, the event
\[ 
E_{\psi,t} = \{( \vphi_i )  \st (1-\alpha) \bullet \ketbra{\psi}{\psi}  \in t B_{\op} 
+ \conv \{ \ketbra{\vphi_i}{\vphi_i} \, :  1 \leq i \leq 2L^3\} 
\]
satisfies
\[ 1- \P(E_{\psi,t}) \leq \exp \left(- L \right) + 2d \exp \left(-t^2 L^2/8  \right) .\]
\end{lemma}

\noindent We apply Lemma \ref{lemma:approximation-D-by-random} with $t=\e/8d$. 
When the event $E_{\psi,t}$ holds, we have
\begin{equation} \label{equation:approx2}
(1 - \alpha) \bra{\psi} A \ket{\psi} \leq t\|A\|_{\textnormal{Tr}} + \sup_{\vphi_i \in \cN} \bra{\vphi_i} A \ket{\vphi_i}. 
\end{equation}
If the events $E_{\psi,t}$ hold simultaneously for every $\psi \in \mathcal{M}$, 
we can conclude from \eqref{equation:approx1} and \eqref{equation:approx2}
that
\begin{equation} \label{equation:approx3} 
(1-\e/2)(1 - \alpha) \lambda_1(A) \leq t \|A\|_{\textnormal{Tr}} + \sup_{\vphi_i \in \cN} \bra{\vphi_i} A \ket{\vphi_i}
\end{equation} 
Since $A$ has trace zero, we have $\|A\|_{\textnormal{Tr}} \leq 2d \lambda_1(A)$, 
and so \eqref{equation:approx3} combined with \eqref{eq:average-cap-weight}  implies that
\begin{equation*} \label{equation:approx4} 
(1-\e) \lambda_1(A) \leq \big( (1-\e/2)(1 - \alpha) -2td \big)  \lambda_1(A) 
\leq \sup_{\vphi_i \in \cN} \bra{\vphi_i} A \ket{\vphi_i},
\end{equation*}
yielding \eqref{eq:bound-lambdamax}. 
The Proposition will follow once we establish that, with positive probability,  
the events $E_{\psi,t}$ hold simultaneously for all $\psi \in \mathcal{M}$. 
To that end, we use Lemma \ref{lemma:approximation-D-by-random},  
 the estimate $\card \mathcal{M} \leq (12d/\e)^{2d}$ from 
Lemma \ref{lemma:net}, and the union bound
\begin{eqnarray} \label{eq:union-bound1} \P \left( \bigcap_{\psi \in \mathcal{M}} E_{\psi,t} \right) & \geq &
1 - \sum_{\psi \in \mathcal{M}} (1- \P(E_{\psi,t})) \\ \label{eq:union-bound2}
& \geq  & 1 - \left( \frac{12d}{\e} \right)^{2d} \left( \exp \left(- L \right) + 
2d \exp \left(-\e^2 d^{-2} L^2/512  \right)
\right).
\end{eqnarray}
We know from Lemma \ref{lemma:net}
that $\exp(c_1(\e) d) \leq L \leq \exp(C_1(\e) d)$ for some constants
$c_1(\e)$, $C_1(\e)$ depending only on $\e$. 
It follows that the quantity in \eqref{eq:union-bound1}--\eqref{eq:union-bound2}
is positive for $d$ large enough (depending on $\e$), yielding a family of $2L^3 \leq 2 \exp(3C_1(\e) d)$ 
vectors satisfying the conclusion of
Proposition \ref{proposition:approximation-D}. 
Small values of $d$ are taken into account by adjusting the constant
$C(\e)$ if necessary.
\end{proof}

\begin{proof}[Proof of Lemma \ref{lemma:approximation-D-by-random}]
Let $M_\psi = \card (\cN \cap C(\psi,\theta))$. 
The random variable $M_\psi$ follows the binomial 
distribution $B(N,p)$ for $N=2L^3$ and $p=1/L$. It follows from Hoeffding's inequality \cite{Hoeffding63}
that
\[ \P \left( B(N,p) \leq \frac{Np}{2} \right) \leq \exp \left( - \frac{p^2N}{2} \right). \]
Specialized to our situation, this yields
\begin{equation} \label{eq:bound_Mx} 
\P \left( M_\psi \leq L^2 \right) \leq \exp \left( - L \right).
\end{equation}
Moreover, conditionally on the value of $M_\psi$, the points from $\cN \cap C(\psi,\theta)$ have the same
distribution as $(\vphi_k)_{1 \leq k \leq M_\psi}$, where $(\vphi_k)$ 
are independent and uniformly distributed inside $C(\psi,\theta)$.
The random matrices 
\[
X_k=\ketbra{\vphi_k}{\vphi_k} - \E \ketbra{\vphi_1}{\vphi_1} 
= \ketbra{\vphi_k}{\vphi_k} - (1-\alpha) \bullet \ketbra{\psi}{\psi} 
\] 
(cf.\ \eqref{eq:average-cap})  are 
independent mean zero matrices. 
We now use the matrix Hoeffding inequality (see, e.g., Theorem 1.3 in \cite{Tropp12}) 
to conclude that, for any $t \geq 0$,
\begin{equation} \label{eq:matrix-hoeffding}
 \P \left( \left\| \frac{1}{M_\psi} \sum_{k=1}^{M_\psi} X_k \right\|_{\op} \geq t \right) \leq 2d \exp(-M_{\psi}t^2/8)  
\end{equation} 
(the factor $2$ appears because we want to control the operator norm rather than the 
largest eigenvalue).  
Define a random matrix $\Delta$ by the relation
\begin{equation*} \label{eq:def-Deltax}
\frac{1}{M_\psi} \sum_{k=1}^{M_\psi} \ketbra{\vphi_k}{\vphi_k}  + \Delta = (1-\alpha) \bullet \ketbra{\psi}{\psi} . 
\end{equation*}
The bound \eqref{eq:matrix-hoeffding} translates then into
$\P( \|\Delta\|_{\op} \geq t) \leq 2d \exp(-M_\psi t^2/8)$. If we remove the conditioning on $M_\psi$ and take
\eqref{eq:bound_Mx} into account, we are led to
\[ \P ( \|\Delta\|_{\op} \geq t ) \leq \exp \left( - L \right) + 2d \exp \left(-L^2 t^2/8 
\right),\]
whence  Lemma \ref{lemma:approximation-D-by-random} follows. 
\end{proof}

\section*{Conclusions}

As a consequence of Milman's tangible version of Dvoretzky's theorem, 
we gave an illustration of the complexity of entanglement in high dimensions 
by showing that the set of separable states requires
a super-exponential number of entanglement witnesses to be approximated within a constant factor, 
independent of the dimension of the instance.   
To the best of our knowledge, this is the first (unconditional) result of this nature that doesn't collapse 
in  presence of substantial randomizing noise. 
 
\smallskip 
 
There are several possible directions in which this work can be continued. 

\smallskip 

{\bfseries Upper bounds}. Are there matching upper bounds on the cardinal of minimal 
universal families of positive maps, in the sense of Theorem \ref{theorem:main}? 
One upper bound is $\exp(d_F(\cS))$, so the question is essentially equivalent to computing the
facial dimension of $\cS(\C^d \otimes \C^d)$. Since its linear dimension is $d^4-1$, 
an upper bound of $O(d^4)$ follows from general arguments.
Closing the gap between this upper bound and the lower estimate $\Omega(d^3/\log d)$ 
seems an interesting question. While related constructions using nets were considered 
by various authors, it seems likely that having a new class of invariants to focus on 
may lead to sharper results. 

\smallskip 

{\bfseries More/less robust entanglement}. 
For which values of $\e_d$ in Theorem 1' can we conclude that 
$N \gg 1$? Note that this question is meaningful only for $\e_d > \frac{2}{2+d^2}$ 
since any $\rho \in \cD(\C^d \otimes \C^d)$ has the
property that $\frac{2}{2+d^2} \bullet \rho$ is separable \cite{VidalTarrach99}. 
In the opposite direction, the reader will notice that the assertions of Theorem \ref{theorem:main} 
and  Theorem 1' differ rather modestly, and that we do not  
get further strengthening if we let $\e$ to be close to $1$ (i.e., if we insist that 
the family of witnesses be universal for all $\rho$ such that  
$(1-\delta)\bullet \rho$ is entangled for some small $\delta >0$, 
or even if we let $\delta=\delta_d \to 0$ when $d \to \iy$). 
This is because our lower bound on $d_F(\cS,A)$ does not improve substantially when $A\to 1$. 
However, it is still conceivable that, for some other 
general reasons, $\cS$ is harder to ``finely-approximate'' by a polytope with few faces 
than $\cD$, which would perhaps allow to retrieve the results of \cite{Skowronek16} 
from general principles, and to link the perspective provided by our approach with the 
prior algorithmic 
 results on entanglement detection.  

\smallskip 
 
 {\bfseries Multipartite or unbalanced setting}.   What if the underlying Hilbert space is 
of the form $\cH=\C^d \otimes \C^m$  or $\cH=\big(\C^d\big)^{\otimes N}$? 

\smallskip 

{\bfseries Unbalanced witnesses}. What if we use witnesses $\Phi : \M_d \to \M_m$, where $m=\mathrm{poly}(d)$?

\medskip

Finally, our primary motivation was to bring to the attention of 
the quantum information and theoretical computer science
 communities another tool from asymptotic geometric analysis, which didn't seem to be 
 widely known. Given the strong algorithmic flavor of the (40 years old!) 
 Figiel--Lindenstrauss--Milman inequality \eqref{eq:FLM}, 
it is quite likely that it has applications to complexity theory that go beyond entanglement detection. 

\section*{Acknowledgements}

The authors would like to thank the Instituto de Ciencias Matem\'aticas in Madrid where this work was initiated.

\bibliographystyle{amsplain}

\begin{dajauthors}
\begin{authorinfo}[ga]
  Guillaume Aubrun\\
  Universit\'e Lyon 1\\
  Lyon, France\\
  aubrun\imageat{}math\imagedot{}univ-lyon1\imagedot{}fr \\
  \url{http://math.univ-lyon1.fr/~aubrun}
\end{authorinfo}
\begin{authorinfo}[sjs]
  Stanis\l aw Szarek\\
  Case Western Reserve University \& Universit\'{e}  Pierre et Marie Curie\\
  Cleveland, USA \& Paris, France\\
  szarek\imageat{}cwru\imagedot{}edu \\
  \url{http://www.cwru.edu/artsci/math/szarek}
\end{authorinfo}
\end{dajauthors}

\end{document}